\newtheorem{corollary}{Corollary}
\newtheorem{definition}{Definition}
\newtheorem{lemma}{Lemma}
\newtheorem{proposition}{Proposition}
\newtheorem{theorem}{Theorem}
\newtheorem*{remark}{Remark}
\newcommand{\bra}[1]{\langle #1|}
\newcommand{\ket}[1]{|#1\rangle}
\newcommand{\ketbra}[2]{|#1\rangle\langle #2|}
\newcommand{\ip}[2]{\langle #1|#2\rangle}
\newcommand{\op}[2]{|#1\rangle \langle #2|}
\newcommand{\1}{{\openone}}
\newcommand{\mbf}{\mathbf}
\newcommand{\mbb}{\mathbb}
\newcommand{\mc}{\mathcal}
\newcommand{\tr}{\text{Tr}}
\DeclareMathOperator{\diag}{\textrm{diag}}
\begin{document}

\title{Bounds on Instantaneous Nonlocal Quantum~Computation}

\author{Alvin~Gonzales and Eric~Chitambar\\

\thanks{A.G. is with the Department of Computer Science, Southern Illinois University, Carbondale, Illinois 62901, USA.}

\thanks{E.C. is with the Department of Electrical and Computer Engineering, Coordinated Science Laboratory, University of Illinois at Urbana-Champaign, Urbana, IL 61801, USA.}}

\maketitle

\begin{abstract}

Instantaneous nonlocal quantum computation refers to a process in which spacelike separated parties simulate a nonlocal quantum operation on their joint systems through the consumption of pre-shared entanglement.  To prevent a violation of causality, this simulation succeeds up to local errors that can only be corrected after the parties communicate classically with one another.  However, this communication is non-interactive, and it involves just the broadcasting of local measurement outcomes.  We refer to this operational paradigm as local operations and broadcast communication (LOBC) to distinguish it from the standard local operations and (interactive) classical communication (LOCC). 

 
In this paper, we show that an arbitrary two-qubit gate can be implemented by LOBC with $\epsilon$-error using $O(\log(1/\epsilon))$ entangled bits (ebits).  This offers an exponential improvement over the best known two-qubit protocols, whose ebit costs behave as $O(1/\epsilon)$.  We also consider the family of binary controlled gates on dimensions $d_A\otimes d_B$.  We find that any hermitian gate of this form can be implemented by LOBC using a single shared ebit.  In sharp contrast, a lower bound of $\log d_B$ ebits is shown in the case of generic (i.e. non-hermitian) gates from this family, even when $d_A=2$.  This demonstrates an unbounded gap between the entanglement costs of LOCC and LOBC gate implementation.  Whereas previous lower bounds on the entanglement cost for instantaneous nonlocal computation restrict the minimum dimension of the needed entanglement, we bound its entanglement entropy.  To our knowledge this is the first such lower bound of its kind.


\end{abstract}

\section{Introduction}
Distributed quantum computing on a multipartite system can arise in many common scenarios. For example, individuals at two different countries communicating classically with each other might want to combine their computing power to solve a difficult problem together. This type of quantum computation has been studied extensively under the setting of local operations and classical communication (LOCC).  Under LOCC, pre-shared entanglement can be manipulated and put to use in some quantum information processing task.  In particular, the parties can transmit quantum states back and forth using teleportation \cite{Bennett-1993a}, and thus they can simulate any quantum gate that acts globally across their systems.

In this paper, we focus on the setting of local operations and broadcast communication (LOBC). Contrary to the standard LOCC model, in LOBC the classical communication is non-interactive, meaning the parties can just send each other one message that depends only on their own local measurement data.  Hence, consecutive rounds of teleportation are forbidden in this model.  Research into LOCC has typically made a distinction between protocols in which just a single party sends a message (i.e. one-way protocols) and those in which interactive messages are exchanged between the parties (i.e. two-way protocols).  More generally, the subject of LOCC round complexity studies the question of how much more powerful LOCC operations become as more rounds of classical communication are permitted \cite{Owari-2008a, Chitambar-2011a, Nathanson-2013a, Wakakuwa-2016a, Chitambar-2017a}.  

There are two main motivations for considering LOBC operations.  The first, being practical in nature, is that an LOBC protocol is typically more time efficient than a general LOCC process.  More precisely, the duration of an LOBC protocol is no longer than the time it takes a message to be sent between two parties of greatest separation.  This is of vital importance for realistic quantum information processing in which maintaining coherence for long time lengths is a formidable challenge.  The time-constrained nature of LOBC processing has also found cryptographic application in the task of position verification \cite{Chandran-2009a, Kent-2011a, Malaney-2010a, Lau-2011a, Buhrman-2013a}, and we review this connection in Section \ref{Sect:QPV}. 

A second motivation is more fundamental in nature and it involves understanding \textit{interaction} as a resource in distributed quantum information.  The specific problem we study in this paper is the simulation of some nonlocal gate using pre-shared entanglement and LOBC operations.  Historically, this task has been referred to as \textit{instantaneous nonlocal computation}, but such a title can be misleading as the complete computation requires a nonzero implementation time; see Section \ref{Sect:NLQC}.  We consider the question of how much entanglement is needed to simulate a given gate when non-interactive classical communication is allowed.  This LOBC entanglement cost can then be compared to the LOCC entanglement cost of simulating the same gate when interactive classical communication is permitted (see Figs. \ref{fig:LOCC-simulation} and \ref{fig:LOBC-simulation}).  As a result, quantitative trade-offs can be formulated between shared entanglement and interactive classical communication.  Beyond exemplifying this type of resource trade-off, the task of instantaneous nonlocal computation touches on foundational questions in computation theory, as it provides a benchmark for assessing operational capabilities in generalized probability theories \cite{Buhrman-2014a, Broadbent-2016a}.

This paper is structured as follows.  We begin in the next section by describing the task of instantaneous nonlocal computation.  Known results are reviewed and they are compared to analogous results in the general LOCC setting.  In Section \ref{Sect:QPV}, the cryptographic application of position verification is described in both the classical and quantum settings.  Section \ref{Sect:Results} contains our new results which involve deriving improved upper and lower bounds on the entanglement cost of simulating different families of gates using LOBC.  The main proofs and protocols are then presented in Section \ref{protocols}, and finally Section \ref{Sect:Conclusion} provides some concluding remarks.


\section{Instantaneous Nonlocal Quantum Computation}

\label{Sect:NLQC}

\begin{figure}
    \centering
        \includegraphics[width=.48\textwidth]{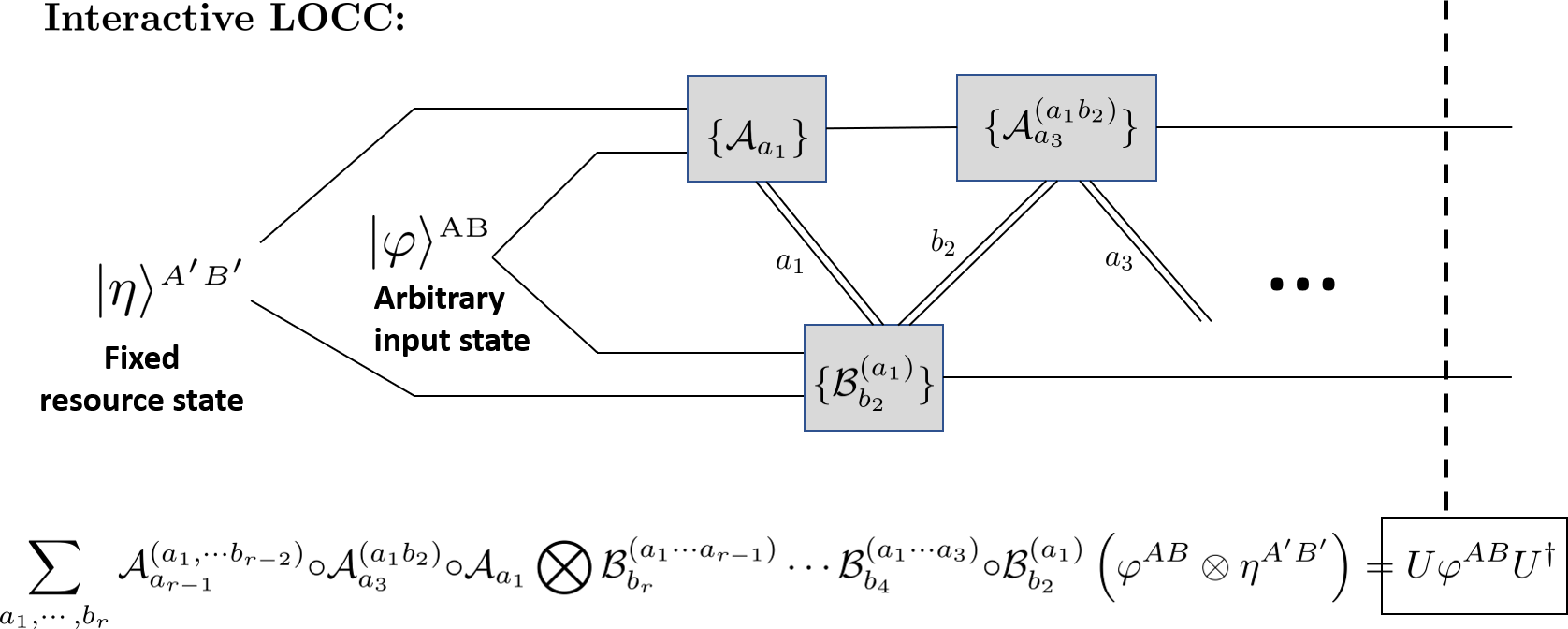}
        \caption{The LOCC simulation of a nonlocal gate $U$ may involve multiple rounds of interactive communication (see, for example, \cite{Wakakuwa-2016a}).  Alice and Bob perform local measurements and communicate their measurement outcomes $a_n$ and $b_{n+1}$.  The choice of local measurement at each round can depend on the outcomes of previous measurements.}
        \label{fig:LOCC-simulation}
    ~ 
    \end{figure}

\begin{figure}
    \centering
        \includegraphics[width=.46\textwidth]{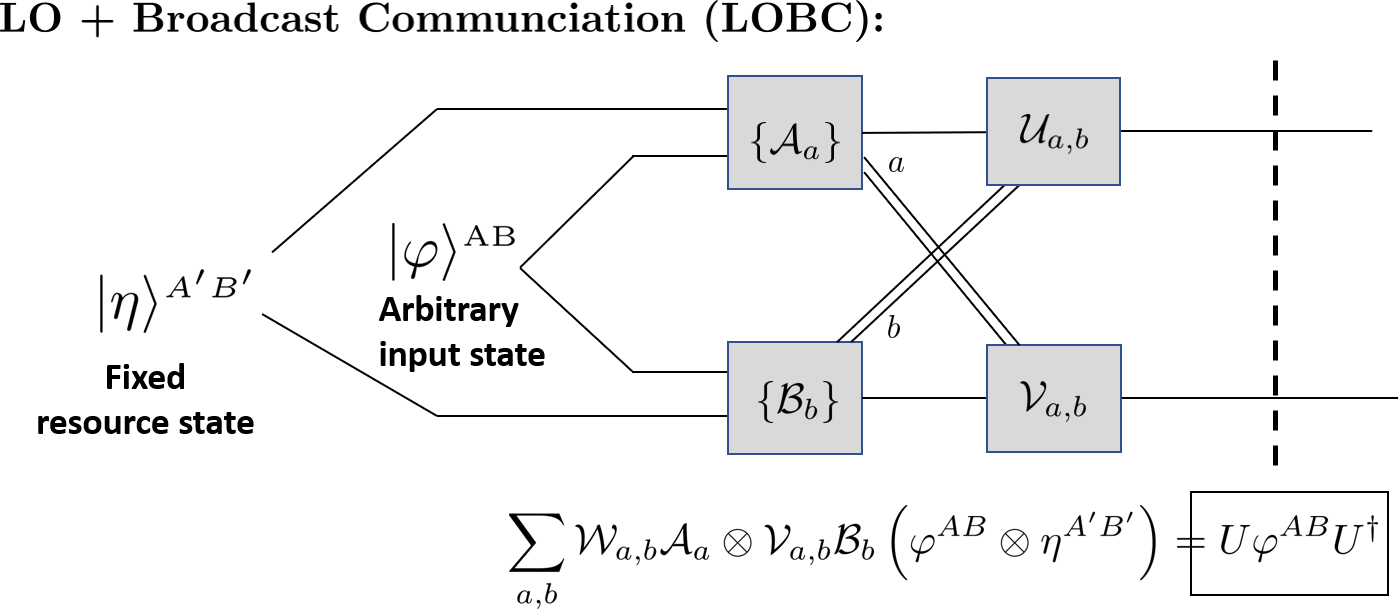}
        \caption{In the LOBC simulation of a nonlocal gate $U$, two-way signaling is allowed but with no interaction.  Protocols of this form are called instantaneous nonlocal computation of the gate $U$.  This paper considers how much more entanglement $\ket{\eta}$ is needed in the LOBC model to make up for the lost interactive classical communication.}
        \label{fig:LOBC-simulation}
    ~ 
    \end{figure}

In instantaneous nonlocal quantum computation, the goal is to apply a global unitary gate over some multipartite system using local measurements alone.  That is, for a given unitary $U$ and arbitrary initial state $\ket{\psi}$, i.e., one whose classical description is unknown to the parties, they wish to invoke the transformation 
\begin{equation}
\ket{\psi}\to U\ket{\psi}
\end{equation}
by performing simultaneous local measurements on their respective subsystems; hence the description ``instantaneous nonlocal computation.''  Of course, the notion of ``instantaneous computation'' should not be taken literally since this process is not physically possible for two reasons.  The first reason is that $U$ may be an entangling gate, and the transformation $\ket{\psi}\to U\ket{\psi}$ could then generate entanglement, something which is not possible using local operations.  One can overcome this objection by allowing the parties to consume entanglement in the process.  Such a transformation then takes the form 
\begin{equation}
\ket{\psi}\otimes\ket{\eta}\to U\ket{\psi},
\end{equation}
where $\ket{\eta}$ is some pre-shared entanglement resource known to all the parties.  However, this process is still not possible in general due to relativistic constraints.  If, for example, $U$ were simply a permutation operators, then the transformation $\ket{\psi}\otimes\ket{\eta}\to U\ket{\psi}$ could allow for instantaneous communication among the spatially separated parties, an impossibility even when using an unbounded amount of entanglement $\ket{\eta}$ \cite{Bennett-2003a}.  Thus the problem must be further modified if it is to be physically feasible.  

One relaxation is to allow for locally correctable errors on the transformed state.  The collective outcomes of the different local measurements can be denoted by variable $m$ so that given particular outcomes $m$, the induced state transformation has the form $\ket{\psi}\to\ket{\phi_m}$.  Instead of aiming to achieve $\ket{\phi_m}=U\ket{\psi}$ for every $m$, the goal is for $\ket{\phi_m}\overset{LU(m)}{=} U\ket{\psi}$, where $\overset{LU(m)}{=}$ means that the two states are related by a local unitary (LU) transformation that can be determined from the measurement data $m$.  In this sense, the task of instantaneous nonlocal quantum computation of the gate $U$ means that
\begin{equation}
\label{Eq:INLQC}
\ket{\psi}\otimes \ket{\eta}\to\ket{\phi_m}\overset{LU(m)}{=} U\ket{\psi}\qquad\forall m,
\end{equation}
using local quantum measurements having outcomes $m$.  This could be further relaxed by considering target states $\epsilon$-close to $U\ket{\psi}$ or by allowing the equality to hold not for all measurement outcomes $m$, but only those belonging to some highly probable set.  Equation \eqref{Eq:INLQC} thus describes a process using local operations and broadcast communication (LOBC).  Each party makes a suitable local measurement and then broadcasts the outcome.  From this globally shared information $m$, the LU error correction can be determined and implemented with no further communication.  The resultant transformation is then $\ket{\psi}\otimes\ket{\eta}\to U\ket{\psi}$, and the desired simulation of gate $U$ is achieved.  The main focus of this paper is on determining the minimal amount of entanglement $\ket{\eta}$ needed to simulate a given unitary $U$ in this way.

That it is even possible to perform Eq. \eqref{Eq:INLQC} for every unitary $U$ is not obvious.  It was first shown by Vaidman \cite{Vaidman-2003a} that instantaneous nonlocal computation can always be attained with arbitrarily high probability provided that the parties share enough entanglement.  Specifically, in Vaidman's scheme the entanglement consumption scales as $O(2^{\log(1/\epsilon)\cdot 2^{4n}})$, with $\epsilon$ being the error and $n$ being the number of qubits comprising the shared state $\ket{\psi}$.  In this protocol, the full entanglement $\ket{\eta}$ must be consumed for every outcome $m$.  An improved protocol was devised by Clark \textit{et al.} in which some of the outcomes $m$ use only part of the initial entanglement, leaving the remainder usable for another task \cite{Clark-2010a}.  However, the average entanglement consumed across all outcomes $m$ in this protocol still scales double exponentially in the system size.  A breakthrough was later made by Beigi and K\"{o}nig who used port-based teleportation \cite{Ishizaka-2008a, Ishizaka-2009a} as a primary subroutine within their protocol \cite{Beigi-2011a}.  They were able to develop a general method for instantaneous nonlocal computation that uses only $O(n\frac{2^{8n}}{\epsilon^2})$ ebits.  

Subsequent work has also been conducted on the instantaneous nonlocal computation of certain families of gates.  For gates belonging to the so-called Clifford hierarchy, specialized protocols have been devised by Chakraborty and Leverrier \cite{Chakraborty-2015a}.  General LOBC protocols were referred to as fast protocols by Yu et al. in Ref. \cite{Yu-2012a}, and they were able to construct specific protocols for the nonlocal implementation of unitaries having certain group structure.  A different resource analysis has been carried out by Speelman who related entanglement consumption to the $T$-gate configuration in a quantum circuit realizing a given unitary $U$ \cite{Speelman-2016a}.  A restricted form of LOBC operations were studied for the task of entanglement distillation under the name of ``measure and exchange'' (MX) operations \cite{Rozpedek-2018a}. 

An important problem in the study of instantaneous nonlocal computation is to prove lower bounds on the entanglement cost for implementing certain gates.  One automatic lower bound comes from the \textit{entangling power} of the gate, which was alluded to at the start of this section.  The entangling power is defined as the maximum increase in entanglement among all input states acted upon by the gate, and entanglement monotonicity under LOCC prohibits the entanglement implementation cost from being less than the entangling power.  Note that since the entangling power is a property of the gate, it cannot be used as a lower bound that differentiates the LOCC and LOBC entanglement costs of implementation.  Unfortunately, beyond the entangling-power bound, relatively little else has been proven.  While the best upper bounds for simulating an arbitrary gate have entanglement costs that scale exponentially in the system size, it is unknown whether this amount of entanglement is necessary.  The best lower bounds on the dimension of the shared entanglement scale linearly in the system dimension of the gate being implemented \cite{Beigi-2011a, Tomamichel-2013a}.  A similar lower bound was proven for a BB84-based gate except in terms of the entanglement measure $E_{\max}$ \cite{Ribeiro-2015a}.  One drawback of these lower bounds is that they are not given in terms of ebit cost, unlike the upper bounds.  This can be problematic for making comparative statements between upper and lower bounds.  For example, if one considers the measure $E_{\max}$, which is no greater than the dimension of the entanglement, then the family of states 
\begin{equation}
\ket{\eta_d}=\sqrt{1-\frac{1}{\sqrt{d}}}\ket{11}+ \sqrt{\frac{1}{\sqrt{d}(d-1)}}\sum_{k=2}^{d}\ket{kk}
\end{equation}
demonstrates $E_{\max}(\op{\eta_d}{\eta_d})\to\infty$ as $d\to\infty$, while $\mathrm{E}(\op{\eta_d}{\eta_d})\to 0$.  Here $\mathrm{E}$ is the entanglement entropy which quantifies the amount of ebits in a bipartite pure state \cite{Bennett-1996a, Popescu-1997a}.  The divergence of $E_{\max}$ in this example can be easily seen from the fact that $E_{\max}(\op{\eta_d}{\eta_d})$ coincides with the log-robustness of entanglement \cite{Datta-2009b}, which has the form $2\log(\sum_{k=1}^d\lambda_k)$ for Schmidt coefficients $\lambda_k$.  Thus, $E_{\max}$ and the entanglement entropy $\mathrm{E}$ can behave quite differently, and in terms of ebit cost, no lower bounds have been previously demonstrated for instantaneous nonlocal computation beyond the entanglement power.  To our knowledge, the same is also true for general LOCC gate simulation.

This is particularly relevant to the question of trade-offs between entanglement and interaction described in the introduction.  One motivation for this work is to understand classical interaction as a resource in distributed quantum information processing.  Its resource character can be quantified in terms of how much entanglement the parties must spend to remove interaction from the general LOCC setting and still complete the given task.  Hence, it seems very natural to make this quantification using the standard resource unit of entanglement, which is an ebit.  In this paper we provide such an ebit lower bound on the entanglement cost of performing generic bipartite controlled-phase gates using LOBC (Theorem \ref{Thm:2bys}).

To make a comparison between protocols with interactive communication and those without, we now briefly review some relevant results on the task of gate simulation using general LOCC.  First note that any $d_A\times d_B$ gate can be implemented using teleportation and interactive communication at a cost of $2\log d_A$ ebits.  However, often this is not the optimal protocol.  For Clifford gates, the entanglement cost is to equal the entangling power \cite{Wakakuwa-2019a}, which can be less than the dimension-bound of teleporation.  For two qubits, any controlled unitary gate can be implemented under LOCC with just one shared ebit and two bits of classical information \cite{Collins-2001a, Eisert-2000a}.  This entanglement cost was later proven to be optimal for resource states having Schmidt rank two \cite{Soeda-2011a}.  A generalization of this result came in Ref. \cite{Stahlke-2011a}, where it was shown that if an entangled resource state can simulate a unitary gate whose Schmidt rank is the same as the resource state, then the latter must be maximally entangled.  Interestingly, these lower bounds no longer hold for resource states having a Schmidt rank that exceeds the Schmidt rank of the simulated gate, and they therefore fail to provide an ebit lower bound on the LOCC entanglement cost of gate simulation.
In complementary earlier work, Cirac \textit{et al.} have shown that the entanglement needed to simulate a family of weakly entangling gates can be smaller than one, and it approaches zero as the entangling power of these gates likewise approaches zero \cite{Cirac-2001a}.  Our main protocol in Theorem \ref{Thm:2qubits} draws inspiration from the protocol described in Ref. \cite{Cirac-2001a}. 

When studying the entanglement cost of implementing a nonlocal unitary using either LOBC or LOCC, the problems of exact simulation versus $\epsilon$-approximate simulation are different in nature.  In fact, the entanglement cost could be far less in the $\epsilon$-approximate regime, and arguably this is the more relevant setting to consider for realistic applications.  However, the problem of exact simulation is still important from a fundamental perspective as it allows for fundamental separations to be drawn between LOBC and general LOCC.  Furthermore, if one places a bound on the dimension of the entanglement resource, then the set of LOBC operations is compact and the cost of exact simulation serves as a limit for the $\epsilon$-approximate cost as $\epsilon\to 0$.  In this paper we consider both variants of the problem.  Specifically, Theorem \ref{Thm:2qubits} pertains to the approximate simulation of an arbitrary two-qubit gate whereas Proposition \ref{Prop:2ebitsExact}, Theorem \ref{Thm:cHermitian}, and Theorem \ref{Thm:2bys} deal with exact implementations.

\section{Classical and Quantum Position Verification}

\label{Sect:QPV}

A concrete application of instantaneous nonlocal quantum computation by LOBC is quantum position verification (QPV). In position verification, a group of verifiers want to check if a prover $P$, who claims to be in position $pos$, is indeed at that location. A general verification scheme is to send a challenge to $P$ and check if $P$ responds with the correct answer within a specified amount of time. This technique is called distance bounding, and it was introduced in the classical setting by Brands and Chaum \cite{Brands-1993a}.  The intuition behind the scheme is that the adversaries, none of whom are at $pos$, are prohibited by relativistic constraints to correctly respond to the challenge within the allowed time frame.
However, this intuition fails, and classical position verification has been shown to be insecure against multiple colluding adversaries \cite{Chandran-2009a}. 

One key step in the classical attacks is the cloning of information by the colluding adversaries.  Since general cloning is not allowed in quantum mechanics, scientists attempted to build secure position-verification protocols based on the exchange of quantum information.
The first QPV protocols were invented in 2002 under the name ''quantum tagging''  \cite{Kent-2011a} with independent schemes proposed in Refs. \cite{Malaney-2010a} and \cite{Chandran-2010a}.  However, these protocols are insecure provided the attackers have enough pre-shared entanglement \cite{Kent-2011a, Lau-2011a}.
In general, all these protocols fall to a general attack based on instantaneous nonlocal quantum computation, as presented in detail by Buhrman \textit{et al.} \cite{Buhrman-2014a}.  The attack relies on teleport$^*$ (teleportation without communication) and the use of multiple ``teleportation'' channels for each possible Pauli error. Thus, at the end of the protocol, the adversaries share the correct state in one of the channels.  Through broadcasting their measurement outcomes, they can then identify this channel and fool the verifiers.  However, the amount of entanglement consumed in this strategy is doubly exponential in the size of the system. Beigi \textit{et al.} \cite{Beigi-2011a} later improved on this result by using "port-based teleportation," which uses an amount of entanglement only exponential in the system size.  It remains an important open problem whether or not QPV attacks exist that are sub-exponential in their entanglement consumption, and the best lower bounds only require the dimension of the entanglement to scale linearly with respect to the dimension of the simulated gate.  

We should emphasize, however, that an LOBC attack is not the most general attack that can be performed on a QPV scheme.  Indeed, LOBC assumes that the adversaries only communicate with one another classically.  A conceivably more powerful attack allows the adversaries to exchange quantum information during the protocol as well.  In other words, the operational class that encompasses a broader class of QPV attacks consists in local operations and broadcast quantum communication (LOBQC).  We do not consider such a model in this paper.




\section{Results}

\label{Sect:Results}


\subsection{Two-qubit gates}

\subsubsection{Exact Implementations}

We begin by describing a simple protocol that provides an exact implementation of certain two-qubit unitaries.
\begin{definition}
Let {\upshape $\textbf{L}$} be the family of two-qubit unitaries such that $U\in\;${\upshape $\textbf{L}$} if there exists unitaries $R\otimes\mbb{I}$ and $T_j\otimes V_j$ such that
\begin{equation}
\label{Eq:Unitary-form}
U (R \sigma_j R^\dagger\otimes \mbb{I})U^\dagger=T_j\otimes V_j \qquad\text{for $j=1,2,3$},
\end{equation}
where 
\begin{align}
\sigma_0&=\begin{pmatrix}1&0\\0&1\end{pmatrix}, &\sigma_1&=\begin{pmatrix}0&1\\1&0\end{pmatrix}, \notag\\
\sigma_2&=\begin{pmatrix}0&-i\\i&0\end{pmatrix} &\sigma_3&=\begin{pmatrix}1&0\\0&-1\end{pmatrix}
\end{align}
are the standard Pauli matrices.
\end{definition}

\begin{proposition}\label{Prop:2ebitsExact}
Any $U\in\;${\upshape $\textbf{L}$} can be perfectly simulated by LOBC using two ebits and four classical bits of (non-interactive) communication.
\end{proposition}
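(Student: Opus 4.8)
The plan is to give an explicit teleportation-based protocol that ``teleports Alice's qubit through the gate $U$.'' The only entanglement resource is two Bell pairs: one pair (with halves $a_1$ held by Alice and $a_2$ held by Bob) is used to send Alice's input qubit $A$ over to Bob, and a second pair ($b_1$ with Bob, $b_2$ with Alice) is used to return the first output qubit to Alice after Bob has applied $U$ locally. The defining property of $\mathbf{L}$ is precisely what guarantees that the Pauli error incurred by the uncorrected teleportation is conjugated by $U$ into a product of local unitaries, so that every error can be undone by local corrections determined from the broadcast measurement data.

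Concretely, I would first have Alice teleport $A$ into $a_2$ using a teleportation scheme whose four correction operators are the rotated Paulis $\{R\sigma_a R^\dagger\}_{a=0}^{3}$ rather than the standard $\{\sigma_a\}$; such a scheme exists because $\{R\sigma_a R^\dagger\}$ is again a unitary operator basis, and since $R\otimes\mbb{I}$ is local to Alice it can equivalently be absorbed into her measurement basis or her pre-processing. Conditioned on her outcome $a$, the joint state on $(a_2,B)$ is $(R\sigma_a R^\dagger\otimes\mbb{I})\ket{\psi}$. Bob then applies $U$ to his two qubits $(a_2,B)$ without needing to know $a$, and inserting $U^\dagger U$ and invoking Eq.~\eqref{Eq:Unitary-form} gives
\begin{equation}
U\,(R\sigma_a R^\dagger\otimes\mbb{I})\,\ket{\psi}=(T_a\otimes V_a)\,U\ket{\psi},
\end{equation}
where the case $a=0$ is covered by $T_0\otimes V_0=\mbb{I}$. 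Thus after Bob's gate the two output qubits carry only the \emph{local} error $T_a\otimes V_a$.

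Next, Bob returns the first output qubit (now on $a_2$) to Alice by a standard teleportation through the second Bell pair, with outcome $b$ and Pauli error $\sigma_b$, leaving the state on Alice's $b_2$ and Bob's $B$ equal to $(\sigma_b T_a\otimes V_a)U\ket{\psi}$. Alice broadcasts $a$ (two bits) and Bob broadcasts $b$ (two bits); once both outcomes are known, Alice applies $T_a^\dagger\sigma_b^\dagger$ to $b_2$ and Bob applies $V_a^\dagger$ to $B$, yielding $U\ket{\psi}$ with the first output held by Alice and the second by Bob. This consumes exactly two ebits and four classical bits, as claimed.

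The point requiring care is verifying that this is genuinely an LOBC (non-interactive) protocol rather than a sequential LOCC one. Each party measures only \emph{disjoint} subsystems (Alice acts on $A,a_1,b_2$; Bob acts on $a_2,B,b_1$), so their measurement operators commute and the global state may be evaluated in any time order; in particular Bob may apply $U$ and perform his Bell measurement \emph{before} learning $a$, and all unitary corrections are deferred until after a single round of broadcasts. Hence Alice's message depends only on $a$, Bob's only on $b$, and no interaction is ever needed. The remaining checks are routine: that $\{R\sigma_a R^\dagger\}$ is orthogonal under the Hilbert--Schmidt inner product and therefore realizes a valid qubit teleportation, and the bookkeeping of teleportation phase conventions, neither of which affects the error-correctability guaranteed by Eq.~\eqref{Eq:Unitary-form}.
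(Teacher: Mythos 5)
Your protocol is exactly the paper's own Protocol U2E: teleport Alice's qubit through a rotated Bell measurement $\{(R\sigma_j R^\dagger\otimes\mbb{I})\ket{\Phi^+}\}_j$, let Bob apply $U$ so that Eq.~\eqref{Eq:Unitary-form} converts the error into the local unitary $T_j\otimes V_j$, teleport back, and correct after a single broadcast, for a total of two ebits and four bits. Your additional remarks---handling $j=0$ via $T_0\otimes V_0=\mbb{I}$ and justifying non-interactivity by commutativity of operations on disjoint subsystems---are correct and merely make explicit what the paper leaves implicit in its use of teleportation$^*$.
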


\begin{proof}
The protocol we describe for performing $U\in\textbf{L}$ is similar in spirit to the protocol of Vaidman and Buhrman \textit{et al.} \cite{Vaidman-2003a, Buhrman-2013a}, and we call it U2E (the ``E'' in the name stands for ``exact'').  A subroutine in this protocol is teleportation$^*$, which is the standard teleportation protocol except with no classical communication and no Pauli correction on the receiving end \cite{Buhrman-2013a}.  Thus, at the end of teleportation$^*$, the receiver has the teleported state up to a local Pauli error.

\noindent\textbf{Protocol U2E: Two ebit protocol for $U\in\textbf{L}$}

$\bullet$ Input an arbitrary two-qubit state $\ket{\psi}^{AB}$. 
\begin{enumerate}
\item Suppose that $U$ satisfies Eq. \eqref{Eq:Unitary-form}.  Using ebit $\ket{\Phi^+}^{A_1B_1}=\sqrt{1/2}(\ket{00}+\ket{11})^{A_1B_1}$, Alice teleports$^*$ $A_1$ to Bob by measuring in the rotated Bell basis $\{(R\sigma_j R^\dagger\otimes\mbb{I})\ket{\Phi_1^+}^{AA_1}\}_j$.  This leaves Alice (A) and Bob (B) sharing the state
\begin{align}
(R \sigma_j R^\dagger\otimes\mbb{I})\ket{\psi}^{B_1B},
\end{align}
where $\sigma_j$ is a Pauli error known to Alice.
\item Bob applies the unitary $U$ on systems $B_1B$, and by Eq. \eqref{Eq:Unitary-form} we have
\begin{align}
(T_j\otimes V_j)U\ket{\psi}^{B_1B}.
\end{align}
\item Using ebit $\ket{\Phi^+}^{A_2B_2}$, Bob teleports$^*$ $B_1$ back to Alice, they broadcast their results, and then they perform the necessary local error corrections, i.e. Bob's teleportation Pauli error and as well as $T_j^\dagger\otimes V_j^\dagger$.  In total, Alice and Bob are left in the shared state $U\ket{\psi}$, as desired. 
\end{enumerate}
\end{proof}
In some cases, Protocol U2E is optimal.  For example, consider the swap operator $\mbb{F}\in\textbf{L}$, whose action is given by $\mbb{F}(\ket{\alpha}^A\ket{\beta}^B)=\ket{\beta}^A\ket{\alpha}^B$ for an arbitrary product state $\ket{\alpha}\ket{\beta}$.  Since swap has an entangling power of two ebits (when acting on subsystems $AB$ of the state $\ket{\Phi^+}^{AA'}\otimes \ket{\Phi^+}^{BB'}$), protocol U2E is optimal for the nonlocal simulation of swap.  In fact, it is straightforward to generalize protocol U2E to optimally perform the $d$-dimensional swap operator using teleportation$^*$ with a $d$-dimensional Bell basis.  On the other hand Protocol U2E is sub-optimal for other gates.  For example, CNOT is an element of $\textbf{L}$, and Theorem \ref{Thm:cHermitian} below shows that CNOT can be implemented by LOBC using just one ebit.


Finally, let us briefly comment on the structure of {\upshape $\textbf{L}$}.  First, observe that {\upshape $\textbf{L}$} is closed under local unitary transformations.  That is, if $U\in\;${\upshape $\textbf{L}$}, then so is $V=(R_1\otimes S_1)U(R_2\otimes S_2)$.  In general, we say that unitaries $U$ and $V$ are locally equivalent if they can be related by local unitaries in this way.  Second, consider the two-qubit Pauli group, $\mc{P}_2=\{\sigma_{j}\otimes\sigma_k\}_{j,k=0}^3\times\{\pm 1,\pm i\}$, as well as its normalizer, $\mc{C}_2=\{U: U g U^\dagger\in\mc{P}_2\;\forall g\in\mc{P}_2\}$.  The latter is typically referred to as the Clifford group, and as easily seen from the definitions, any operator locally equivalent to a Clifford operator also belongs to {\upshape $\textbf{L}$}.  However, somewhat surprisingly, the converse is also true.  
\begin{lemma}
\label{Lem:1}
$U\in\;${\upshape $\textbf{L}$} if and only if there exists local unitaries $R_i\otimes S_i$ such that $(R_1\otimes S_1)U(R_2\otimes S_2)\in\mc{C}_2$.
\end{lemma}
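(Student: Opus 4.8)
The reverse direction is immediate and only sketched: if $(R_1\otimes S_1)U(R_2\otimes S_2)=C\in\mc{C}_2$, then writing $U=(R_1^\dagger\otimes S_1^\dagger)C(R_2^\dagger\otimes S_2^\dagger)$ and taking $R=R_2$ in the definition, conjugation collapses the outer local factors on the right, giving $U(R_2\sigma_jR_2^\dagger\otimes\mbb{I})U^\dagger=(R_1^\dagger\otimes S_1^\dagger)\,C(\sigma_j\otimes\mbb{I})C^\dagger\,(R_1\otimes S_1)$. Since $C$ is Clifford, $C(\sigma_j\otimes\mbb{I})C^\dagger$ is a Pauli $\pm\sigma_a\otimes\sigma_b$, and conjugating a product operator by a local unitary leaves it a product operator, so $U\in\mathbf{L}$.

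For the forward direction I would first use that $\mathbf{L}$ is closed under local unitaries to absorb $R$: replacing $U$ by the locally equivalent $U(R\otimes\mbb{I})$, I may assume $P_j:=U(\sigma_j\otimes\mbb{I})U^\dagger=A_j\otimes B_j$ is a product for $j=1,2,3$. Each $P_j$ is a Hermitian unitary with eigenvalues $\pm1$ of multiplicity two, so after fixing the tensor-factor phase both $A_j$ and $B_j$ may be taken to be single-qubit Hermitian unitaries, i.e.\ each is either $\pm\mbb{I}$ or a reflection $\hat n\cdot\vec\sigma$. The Pauli relations survive conjugation, so the $P_j$ pairwise anticommute and satisfy $P_1P_2P_3=i\mbb{I}$.

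The technical heart is to show these constraints force the factors into a shape that a single pair of local rotations can standardize. A short computation shows two single-qubit Hermitian unitaries either commute or anticommute, the latter exactly when both are reflections along orthogonal axes; hence the tensor factors $A_j\otimes B_j$ and $A_k\otimes B_k$ anticommute iff exactly one of the two sides anticommutes. Combining this with $A_1A_2A_3\propto\mbb{I}$ and $B_1B_2B_3\propto\mbb{I}$, I would verify that on each side the three operators are either mutually anticommuting (an orthonormal reflection frame) or mutually commuting with at most two non-scalar factors and those parallel, and that exactly one side is a frame while the other is not. In every resulting case a single $R_A\otimes R_B$ conjugates all of the $P_j$ into the Pauli group, so that $U':=(R_A\otimes R_B)U$ satisfies $U'(\sigma_j\otimes\mbb{I})U'^\dagger=Q_j\in\mc{P}_2$ for an anticommuting triple $Q_1,Q_2,Q_3$ with $Q_1Q_2=iQ_3$.

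It remains to control Bob's Paulis, which the definition of $\mathbf{L}$ never constrained; I expect this to be the main obstacle, together with the bookkeeping of the case analysis above. The key observation is that because $\mbb{I}\otimes\sigma_k$ commutes with every $\sigma_j\otimes\mbb{I}$, its image $U'(\mbb{I}\otimes\sigma_k)U'^\dagger$ lies in the commutant of $\{Q_j\}$, which is again a Pauli subalgebra spanned by the symplectic-complement triple $\bar Q_1,\bar Q_2,\bar Q_3$. The frames $(\sigma_j\otimes\mbb{I},\mbb{I}\otimes\sigma_k)$ and $(Q_j,\bar Q_k)$ carry identical commutation relations, so by transitivity of the Clifford group on such frames there is $D\in\mc{C}_2$ with $D(\sigma_j\otimes\mbb{I})D^\dagger=\pm Q_j$ and $D(\mbb{I}\otimes\sigma_k)D^\dagger=\pm\bar Q_k$. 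Setting $V:=D^\dagger U'$, one checks that $V$ sends $\sigma_j\otimes\mbb{I}$ to $\pm\sigma_j\otimes\mbb{I}$ and $\mbb{I}\otimes\sigma_k$ into $\mbb{I}\otimes M_2$; thus $V$ preserves both tensor factors without swapping them and is therefore local, $V=e^{i\gamma}G_A\otimes G_B$. Unwinding yields $(R_A\otimes R_B)\,U\,(G_A^\dagger\otimes G_B^\dagger)=e^{i\gamma}D\in\mc{C}_2$, which is exactly local equivalence of $U$ to a Clifford operator. The symplectic-transitivity step and the ``preserves both factors $\Rightarrow$ local'' step are standard; the delicate parts are the factorization into Hermitian unitaries and the verification that the commutant is a Pauli subalgebra.
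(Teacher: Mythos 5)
Your route is genuinely different from the paper's and, with one repair noted below, it works. The paper never touches the anticommutation structure of the conjugated Paulis: it reduces $U$ to a magic-basis-diagonal $\Omega$ (Proposition \ref{Prop:magic-3}), characterizes non-entangling unitaries as those real in the magic basis up to phase (Proposition \ref{Prop:magic-4}, supplemented by the trace-reality fact of Proposition \ref{Prop:Paul-real}), and converts the product condition of $\textbf{L}$ into arithmetic on the phases $\phi_k$, concluding that $\alpha,\beta,\gamma$ are multiples of $\pi/4$ and invoking Proposition \ref{Prop:magic-2}. You instead work directly with $P_j=U(\sigma_j\otimes\mbb{I})U^\dagger$, exploit Hermiticity and anticommutation, and finish with Clifford transitivity plus a commutant argument. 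Your finish can even be streamlined: choose $D$ so that $D(\sigma_j\otimes\mbb{I})D^\dagger=Q_j$ exactly (sign-matching transitivity is available because $Q_1,Q_2$ are anticommuting Hermitian Paulis and $Q_3=-iQ_1Q_2$); then $V=D^\dagger U'$ commutes with all of $M_2\otimes\mbb{I}$, whose commutant is $\mbb{I}\otimes M_2$, so $V=\mbb{I}\otimes G_B$ is local and you never need the images of $\mbb{I}\otimes\sigma_k$ or the symplectic complement at all. What the paper's route buys is reuse of its magic-basis machinery and the explicit angle criterion as a byproduct; what yours buys is independence from the Kraus--Cirac decomposition, with a finish built from standard stabilizer facts.

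One intermediate claim is false as stated and needs repair: two single-qubit Hermitian unitaries need \emph{not} commute or anticommute --- $\sigma_z$ and $(\sigma_x+\sigma_z)/\sqrt{2}$ do neither. The dichotomy you want is not a general fact about Hermitian unitaries but a consequence of the product structure you have available: $P_jP_k=-P_kP_j$ reads $(A_jA_k)\otimes(B_jB_k)=(-A_kA_j)\otimes(B_kB_j)$, which forces $A_jA_k=c\,A_kA_j$ and $B_jB_k=-c^{-1}B_kB_j$ for a single scalar $c$; writing reflections as $\hat m\cdot\vec\sigma$ and using $(\hat m\cdot\vec\sigma)(\hat n\cdot\vec\sigma)=(\hat m\cdot\hat n)\mbb{I}+i(\hat m\times\hat n)\cdot\vec\sigma$, proportionality of $A_jA_k$ and $A_kA_j$ is possible only when $\hat m\cdot\hat n=0$ or $\hat m\times\hat n=0$, i.e. $c=\pm1$, giving commute-or-anticommute on each factor with exactly one side anticommuting. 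With that fix, your case analysis goes through as you anticipated: the relation $P_1P_2=iP_3$ eliminates the mixed assignments, leaving exactly one side an orthonormal reflection frame and the other mutually commuting, after which a single local rotation Pauli-izes all three $P_j$ and the remainder of your argument is sound.
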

\noindent The proof is provided in Section \ref{Sect-Lem:1}, and we suspect this lemma may also find application in other quantum computation tasks.  One immediate consequence of Lemma \ref{Lem:1} is that Protocol U2E is no stronger in terms of entanglement consumption than the protocol recently given in Ref. \cite{Wakakuwa-2019a}.  In that paper, the authors provide an LOBC protocol for the implementation of any Clifford gate (in arbitrary dimension).  Their protocol differs in that it involves Alice and Bob sharing the Choi state of $U$ as their resource entanglement.  Since for two qubits the entanglement of the Choi state can be less than two ebits, their protocol in general will have a smaller entanglement consumption.  However, the resource state used in the protocol of Ref. \cite{Wakakuwa-2019a} is specific to the gate being simulated, whereas protocol U2E uses a gate-independent resource state.  One could modify the protocol of Ref. \cite{Wakakuwa-2019a} by first equipping Alice and Bob with some fixed two-ebit resource state, and then have them convert this into the Choi state of a given unitary by LOCC.  Doing this would render a protocol very similar to U2E.

\medskip

\subsubsection{Approximate Implementations}

We now turn to the problem of instantaneous nonlocal computation of an arbitrary two-qubit unitary.  We present a new protocol referred to as U2, and its detailed description is given in Section \ref{protocols}. Except for certain angles, protocol U2 is probabilistic.  It involves diagonalizing a two-qubit unitary in the so-called ``magic basis'' (see Eq. \eqref{Eq:magicbasis}) and then expressing this diagonalization as a sequence of simple single and two-qubit gates.  The protocol then involves implementing these gates under the LOBC constraint following the ``angle-doubling'' error correction idea of Ref. \cite{Cirac-2001a}.  One of the key features of our protocol is that it does not use Vaidman's ``tree of teleportation channels'' \cite{Vaidman-2003a, Clark-2010a, Buhrman-2014a}, and we therefore avoid an exponential growth in entanglement cost.  Its performance is reported
 in the following theorem.


\begin{theorem}\label{Thm:2qubits}
Any two-qubit unitary can be performed under LOBC with probability $(1- 2^{-N})^3$ using a consumption of $8N+1$ ebits.
\end{theorem}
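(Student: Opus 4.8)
The plan is to reduce an arbitrary two-qubit $U$ to a short sequence of nonlocal gates that each admit a non-interactive ``angle-doubling'' implementation, and then to control the failure probability and ebit cost of that implementation. First I would invoke the canonical (KAK) decomposition, which in the magic basis takes the form $U=(A_1\otimes B_1)\,N\,(A_2\otimes B_2)$, where the $A_i\otimes B_i$ are local unitaries and the nonlocal core $N=\exp\!\big(i\sum_{k=1}^{3}\theta_k\,\sigma_k\otimes\sigma_k\big)$ is diagonal in the magic basis. Because LOBC permits arbitrary local operations for free, the two local factors cost no entanglement and may be absorbed into the final LU correction demanded by \eqref{Eq:INLQC}. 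Since the three generators $\sigma_k\otimes\sigma_k$ mutually commute, $N$ factors into $\prod_{k=1}^{3}\exp(i\theta_k\,\sigma_k\otimes\sigma_k)$, a product of three two-qubit rotations, each of which is locally equivalent (by a Clifford conjugation sending $\sigma_k\mapsto\sigma_3$) to the single diagonal gate $\exp(i\theta\,\sigma_3\otimes\sigma_3)$. Thus it suffices to implement one such rotation under LOBC to controllable precision.

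For a single rotation I would adapt the angle-doubling idea of \cite{Cirac-2001a}. The basic gadget consumes one ebit and, via local operations and one broadcast bit, applies $\exp(\pm i\theta\,\sigma_3\otimes\sigma_3)$, with the sign being the (fair) measurement outcome. A $+$ outcome is an exact success; a $-$ outcome overshoots by $-2\theta$, which is cancelled by a second gadget tuned to angle $2\theta$, and in general the residual error after $j$ consecutive wrong signs is corrected by a gadget at angle $2^{j}\theta$. The key point for LOBC is that the angle used at level $j$ does not depend on earlier outcomes, so all $N$ levels can be prepared from pre-shared ebits and measured simultaneously; the parties broadcast the full string of signs and then apply a single LU correction determined by the telescoping identity $-\theta-2\theta-\cdots-2^{j-1}\theta+2^{j}\theta=\theta$. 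Consequently the rotation is realized exactly whenever at least one of the $N$ levels returns $+$, an event of probability $1-2^{-N}$, while the sole failure mode is the all-minus string.

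Finally I would run the three commuting rotations on independent blocks of entanglement, so that their fair coins are independent and the overall success probability is $(1-2^{-N})^{3}$, matching the statement; the error in the complementary event is what forces the protocol to be probabilistic rather than merely LU-correctable. A careful tally of the ebits then yields $8N+1$: one ebit to teleport$^*$ a qubit into position as in Proposition \ref{Prop:2ebitsExact}, together with the teleportation$^*$ and conditional-correction gadgets consumed at each of the $N$ doubling levels for the three rotations. I expect the main obstacle to be precisely this reconciliation of angle-doubling with broadcast-only communication: in the original LOCC scheme the doubled correction is applied adaptively, and I must show that committing to all $N$ correction gadgets in advance---so that they can be measured in one shot and stitched together from the broadcast outcomes---still produces exactly $U\ket{\psi}$ up to a computable LU whenever any level succeeds, without the exponential ``tree of teleportation channels'' of \cite{Vaidman-2003a}. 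Verifying that the residual operator in the surviving branches is genuinely a known local unitary, and bounding the stray entanglement so the ebit count stays linear in $N$, is the delicate step.
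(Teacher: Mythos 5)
Your proposal shares the paper's key ingredients---a magic-basis reduction, angle doubling \`a la \cite{Cirac-2001a}, halting based on one's own outcomes, success probability $(1-2^{-N})^3$, and a final broadcast-plus-LU correction---but the step you yourself flag as delicate is exactly where the argument breaks, and it is not repairable in the form you propose. In your gadget, Bob's action at level $j$ (the rotation by $2^{j}\theta$ coupling his half of the $j$-th ebit to his \emph{data} qubit) must be performed unconditionally, since he cannot learn whether Alice has already succeeded. If Alice halted at level $k<j$, she can only disengage by measuring her half of the $j$-th ebit; this collapses Bob's half to a computational-basis state, and his rotation then imprints $e^{\pm i 2^{j}\theta\sigma_z}$ on his data qubit---a continuous local error whose sign is known only to Alice. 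Within a single $\sigma_z\otimes\sigma_z$ block such errors are harmless (they are diagonal), but your decomposition requires three mutually non-commuting blocks ($\sigma_x\otimes\sigma_x$, $\sigma_y\otimes\sigma_y$, $\sigma_z\otimes\sigma_z$): a leftover $z$-rotation on Bob's qubit neither commutes nor anticommutes with the next block's rotations, so after conjugation the residual operator is genuinely nonlocal, and no final local unitary computable from the broadcast data can remove it. At most one block can be placed last to dodge this, so at least two blocks are corrupted. Relatedly, your ebit count does not follow from your own construction: one ebit per level and three blocks gives roughly $3N$, not $8N+1$; that figure is an artifact of a different protocol structure.

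The paper resolves precisely this difficulty by a different mechanism. It uses the circuit decomposition $M(\alpha,\beta,\gamma)=\overrightarrow{U}_{x}(H\otimes\mbb{I})T_z(\beta)(R_z(\alpha)\otimes R_z(\gamma))(H\otimes\mbb{I})\overrightarrow{U}_{x}$, implementing the two CNOTs with one ebit each via Theorem \ref{Thm:cHermitian}, and implementing the three rotation blocks by \emph{shuttling the state back and forth with teleportation$^*$}: in each round the rotating party acts and teleports$^*$ the system across, and the checking party either continues or halts. Upon halting, she \emph{keeps the data on her side} and feeds the other party collapsed ebit halves; since computational-basis states are eigenstates of $R_z$, his post-halt rotations act trivially on these dummies, and the bit-flip--tracking step of the halting subroutine guarantees that his blind final Pauli correction $\sigma_{\mbf{b}}$ leaves only a $\sigma_z$-type error, which commutes with everything downstream and is fixed after the broadcast. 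It is this teleportation$^*$ structure (two ebits per round for each single-qubit rotation block, four per round for $T_z(\beta)$, plus the CNOT ebits) that yields $1+2N+2N+(4N-1)+1=8N+1$. The missing idea in your proposal is exactly this halting subroutine, which requires the data to reside with the halting party rather than staying in place while pre-committed gadgets fire.
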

\noindent We can compare the efficiency of protocol U2 to the port-based teleportation scheme of Beigi and K\"{o}nig \cite{Beigi-2011a}.  For a two-qubit gate $U$ and any $\epsilon>0$, their protocol generates a quantum channel $\mc{E}$ which consumes $1+\frac{3\cdot 2^{12}}{\epsilon}$ ebits while achieving an approximation of $U$ quantified by $||\mc{E}-\mc{U}||_\diamond\leq \epsilon$, where $\mc{U}(\rho)=U(\rho)U^\dagger$ and $||\cdot||_\diamond$ is the so-called diamond norm \cite{Kitaev-1997a}.  In the protocol U2, Alice and Bob know when they have perfectly implemented the gate and when they have failed.  In the latter case they can simply replace their state with ``white noise,'' and thus U2 implements the quantum channel $\mc{E}_{U2}(\rho)=p\mc{U}(\rho)+(1-p)(\mbb{I}\otimes\mbb{I})/4$ at the cost of $8N+1$ ebits and with $p=(1-2^{-N})^3$.  Setting $\epsilon=2(1-p)$, a straightforward calculation shows
\begin{equation}
||\mc{E}_{U2}-\mc{U}||_\diamond\leq \epsilon
\end{equation}
while consuming 
\begin{equation}
1-8\log[1-(1-\tfrac{\epsilon}{2})^{1/3}]\leq 8\log\left(\frac{1}{\epsilon}\right)+22
\end{equation}
ebits. Hence in terms of approximation error $\epsilon$, protocol U2 offers an exponential saving in the entanglement cost compared to port-based teleportation protocols.  A similar savings holds relative to Vaidman-like schemes \cite{Vaidman-2003a, Buhrman-2013a}.


\subsection{Exact implementation of hermitian binary-controlled gates}

We now turn to a class of unitaries in general $d_A \otimes d_B$ systems.  These are controlled gates of the form
\begin{equation}
\label{Eq:Gate-controlled}
U_c=(\mbb{I}-P)\otimes\mbb{I}+P\otimes V,
\end{equation}
where $P$ is an arbitrary projector on system $A$ and $V=V^\dagger$ is a hermitian unitary operator.  This can be interpreted as a binary switch that applies $V$ on system $B$ when system $A$ lies in the support of $P$.  The LOBC implementation of operators having this form was studied in Ref. \cite{Yu-2012a}.  However, in their protocol the amount of consumed entanglement is not explicitly stated.  Here we show that only a single ebit is needed, regardless of the dimensions.
\begin{theorem}\label{Thm:cHermitian}
Any gate having the form of Eq. \eqref{Eq:Gate-controlled} can be implemented by LOBC using one ebit.
\end{theorem}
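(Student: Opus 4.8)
The plan is to first rewrite the gate into a manifestly symmetric ``controlled-phase'' form that exposes why a single ebit can suffice. Since $V=V^\dagger$ is unitary it is an involution, $V^2=\mbb{I}$, so it admits the spectral decomposition $V=Q_+-Q_-$ into its $\pm1$ eigenprojectors with $Q_++Q_-=\mbb{I}$. Substituting this into Eq.~\eqref{Eq:Gate-controlled} and rewriting $\mbb{I}-P$ and $P$ through the reflection $\mbb{I}-2P$ collapses the gate to
\begin{equation}
U_c=\mbb{I}-2\,P\otimes Q_-,
\end{equation}
i.e. $U_c$ merely imprints a phase $(-1)$ on the product subspace $\mathrm{supp}(P)\otimes\mathrm{supp}(Q_-)$ and acts trivially elsewhere. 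This is a generalized controlled-$Z$ that is completely symmetric in $A$ and $B$, and its entangling power is a single ebit, which is exactly what the protocol will match.

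Next I would build a one-ebit, \emph{non-adaptive} nonlocal controlled-gate gadget (using only the single Bell pair $\ket{\Phi^+}^{A'B'}$, in contrast to the two teleportation$^*$ channels of Protocol U2E). Have Alice apply the local controlled flip $(\mbb{I}-P)\otimes\mbb{I}_{A'}+P\otimes X_{A'}$ on $AA'$, and \emph{simultaneously} have Bob apply the local controlled gate $\op{0}{0}_{B'}\otimes\mbb{I}_B+\op{1}{1}_{B'}\otimes V_B$ on $B'B$; both operations are performed before any communication. Alice then measures $A'$ in the $Z$ basis (outcome $x$), Bob measures $B'$ in the $X$ basis (outcome $z$), and they broadcast $x$ and $z$. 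The corrections are $V^{x}$ applied by Bob to $B$ and the reflection $(\mbb{I}-2P)^{z}$ applied by Alice to $A$. Because each correction uses only the \emph{single} message broadcast by the other party, the protocol is genuinely LOBC.

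To verify correctness I would track a generic component $\ket{\mu}^A\ket{\nu}^B$ with $P\ket{\mu}=x_A\ket{\mu}$, $x_A\in\{0,1\}$. After the two controlled gates, Alice's $Z$-measurement outcome $x$ leaves a product state with Bob holding $V^{x_A\oplus x}\ket{\nu}$ and his ancilla in $\ket{x_A\oplus x}_{B'}$; Bob's subsequent $X$-measurement returns $z$ together with the scalar $(-1)^{z(x_A\oplus x)}$. Crucially, $x$ is uniform over $\{x_A,\bar x_A\}$ and $z$ is uniform, so neither outcome reveals $x_A$ and no superposition over distinct eigenvalues of $P$ is collapsed. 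Applying the two corrections then sends the component to $\ket{\mu}^A\,V^{x_A}\ket{\nu}^B=U_c\ket{\mu}\ket{\nu}$ up to a known global phase $(-1)^{zx}$, and linearity extends this to all inputs (I would spell out one two-term superposition explicitly to confirm coherence is preserved).

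The crux, and the step I would flag as the main obstacle, is that Bob's residual error is $V^{x_A\oplus x}$ rather than the desired $V^{x_A}$, and he must undo the unwanted factor knowing only the broadcast bit $x$, never Alice's data bit $x_A$. This is possible \emph{precisely because $V$ is an involution}: $V^{x_A\oplus x}=V^{x}V^{x_A}$, so reapplying $V^{x}$ cancels the error exactly. Hermiticity is therefore the whole ballgame; for generic (non-hermitian) $V$ this factorization fails, no fixed local correction suffices, and a genuine ebit cost reappears, consistent with the $\log d_B$ lower bound of Theorem~\ref{Thm:2bys}. I would conclude by noting that the qubit $\ket{\Phi^+}^{A'B'}$ is the only entanglement consumed, independent of $d_A$ and $d_B$, which establishes the claim.
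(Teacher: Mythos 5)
Your proposal is correct and is essentially the paper's own protocol in different notation: the paper's Kraus operators $A_x$ and $B_z$ are precisely your controlled flip $(\mbb{I}-P)\otimes\mbb{I}+P\otimes X_{A'}$ followed by a $Z$-basis measurement of $A'$, and your controlled-$V$ from $B'$ to $B$ followed by an $X$-basis measurement of $B'$, with identical corrections $(\mbb{I}-2P)^{z}\otimes V^{x}$ relying on the same involution property $V^2=\mbb{I}$. The only cosmetic differences are your circuit-level presentation and the (nice but inessential) rewriting $U_c=\mbb{I}-2\,P\otimes Q_-$.
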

\begin{proof}
Let $\ket{\eta}^{A'B'}=\dfrac{1}{\sqrt{2}}(\ket{00}+\ket{11})^{A'B'}$ be a shared ebit.  Alice and Bob perform a generalized measurement with respective Kraus operators $\{A_0,A_1\}$ and $\{B_0,B_1\}$, where 
\begin{align}
\notag A_0&=\left[\left(\mbb{I}-P\right)\otimes\bra{0}+P\otimes\bra{1}\right]^{AA'}\\
\notag A_1&=\left[P\otimes\bra{0}+\left(\mbb{I}-P\right)\otimes\bra{1}\right]^{AA'}\\
\notag B_0&=\dfrac{1}{\sqrt{2}}\left[\mbb{I}\otimes\bra{0}+V\otimes\bra{1}\right]^{BB'}\\
B_1&=\dfrac{1}{\sqrt{2}}\left[\mbb{I}\otimes\bra{0}-V\otimes\bra{1}\right]^{BB'}.
\end{align}
Performing these measurements on the initial state $\ket{\psi}^{AB}\ket{\eta}^{A'B'}$ has outcomes
\begin{align}
\notag A_0B_0&:\quad U_c\ket{\psi}\\
\notag A_0B_1&: \quad [(\mbb{I}-P)\otimes\mbb{I}-P\otimes V]\ket{\psi}\\
\notag A_1B_0&: \quad [P\otimes\mbb{I}+(\mbb{I}-P)\otimes V]\ket{\psi}\\
\notag A_1B_1&: \quad [P\otimes\mbb{I}-(\mbb{I}-P)\otimes V]\ket{\psi}.
\end{align}
Define the unitary operator $Z=(\mbb{I}-P)-P$ on Alice's system.  Then for outcome $A_0B_0$ Alice and Bob do nothing, for outcome $A_0B_1$ they perform $Z\otimes\mbb{I}$, for outcome $A_1B_0$ they perform $\mbb{I}\otimes V$, and for outcome $A_1B_1$ they perform $Z\otimes V$.  This attains $U_c\ket{\psi}$ with probability one. 
\end{proof}

\subsection{An ebit lower bound on the exact implementation of generic binary-controlled gates}
We now consider systems of size $2\otimes s$ and show that, in stark contrast to Theorem \ref{Thm:cHermitian}, there are non-Hermitian controlled unitaries whose ebit consumption for implementation depends on the size of $s$. 
\begin{theorem}\label{Thm:2bys}
Let
\begin{equation}
U_\tau=\sum_{j=0}^{s-1} e^{i\tau_j}\op{j}{j}
\end{equation}
have phase angles $\tau_j\in[0,2\pi)$ such that $\tau_k\not=\tau_l$ for all $k\not=l\in\{0,\cdots,s-1\}$.  An LOBC implementation of the controlled unitary
\begin{equation}
\label{Eq:controlled-unitary}
U_c=\op{0}{0}\otimes \mbb{I}_s+\op{1}{1}\otimes U_\tau
\end{equation}
on a $2\otimes s$ system requires at least $\log s$ ebits of shared entanglement resource.
\end{theorem}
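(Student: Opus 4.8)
The plan is to convert the no-interaction structure of LOBC into an exact operator identity and read a lower bound on the entanglement entropy $\mathrm{E}(\eta)=-\sum_k\lambda_k\log\lambda_k$ directly off it, where I write the resource in Schmidt form $\ket{\eta}^{A'B'}=\sum_k\sqrt{\lambda_k}\ket{k}^{A'}\ket{k}^{B'}$. A generic LOBC protocol consists of Alice's Kraus operators $\{M^A_a\}$ on $AA'$, Bob's $\{M^B_b\}$ on $BB'$, and local-unitary corrections $U_{ab}\otimes W_{ab}$ depending on the broadcast pair $(a,b)$, producing $(U_{ab}\otimes W_{ab})(M^A_a\otimes M^B_b)(\ket{\psi}^{AB}\ket{\eta}^{A'B'})=\sqrt{p_{ab}}\,(U_c\ket{\psi})^{AB}\ket{r_{ab}}^{A'B'}$. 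The first step is to exploit that $U_c$ must be simulated on an \emph{unknown} input: by linearity over superpositions, both the outcome probability $p_{ab}$ and the residual state $\ket{r_{ab}}$ must be independent of $\ket{\psi}$, and in particular independent of Bob's computational input $\ket{j}$. This is the crucial consequence of coherent (rather than merely basis-wise) implementation, and it is what makes the problem rigid.

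Fixing one outcome $(a,b)$ and feeding the inputs $\ket{+}^A\ket{j}^B$, the target on Alice's qubit is $\ket{\chi_j}=\tfrac{1}{\sqrt2}(\ket{0}+e^{i\tau_j}\ket{1})$. I would introduce Alice's two component operators $K_i=\bra{i}^A U_{ab}M^A_a\ket{+}^A$ ($i=0,1$, acting on $A'$), Bob's operator $L=W_{ab}M^B_b$ on $BB'$, and the vectors $\ket{\hat\beta_j}=(\mbb{I}_{A'}\otimes L)(\ket{j}^B\ket{\eta}^{A'B'})$. Projecting the output qubit onto $\bra{0}$ and $\bra{1}$ and using $\bra{i}\chi_j\rangle=e^{i\,i\,\tau_j}/\sqrt2$ yields the core relations
\begin{equation}
(K_0\otimes\mbb{I})\ket{\hat\beta_j}=\tfrac{\sqrt{p_{ab}}}{\sqrt2}\ket{j}^B\ket{r_{ab}}^{A'B'},\qquad (K_1\otimes\mbb{I})\ket{\hat\beta_j}=e^{i\tau_j}(K_0\otimes\mbb{I})\ket{\hat\beta_j},
\end{equation}
so the $K_0$-images are mutually orthogonal across $j$ (because $\ket{j}^B$ are), while $K_1$ produces the \emph{same} vectors up to the phase $e^{i\tau_j}$. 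Since $\sum_aM^{A\dagger}_aM^A_a=\mbb{I}$ forces $\langle+|M^{A\dagger}_aM^A_a|+\rangle\le\mbb{I}$, the components obey the contraction bound
\begin{equation}
K_0^\dagger K_0+K_1^\dagger K_1\le\mbb{I}_{A'}.
\end{equation}
Computing the $A'$-marginal $\rho_j=\tr_{BB'}\ketbra{\hat\beta_j}{\hat\beta_j}$ gives $\rho_j=\rho^{1/2}\tilde F_j\,\rho^{1/2}$ with $\rho=\sum_k\lambda_k\op{k}{k}=\rho_\eta^{A'}$ and $0\le\tilde F_j\le\mbb{I}$, hence $\rho_j\le\rho$; and the phase relation places the support of $\rho_j$ inside the eigenspace of the transfer operator $K_0^{-1}K_1$ for the unit-modulus eigenvalue $e^{i\tau_j}$.

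The crux, and the step I expect to be the main obstacle, is to promote this structure to an \emph{entropy} bound rather than a mere dimension bound. The distinctness of the phases $\tau_j$ gives $s$ distinct unit-modulus eigenvalues, which should force the $s$ marginals $\rho_j$ into mutually orthogonal sectors of $\rho$; combined with $\rho_j\le\rho$ and the trace balance $\langle\hat\beta_j|(K_0^\dagger K_0+K_1^\dagger K_1)\otimes\mbb{I}|\hat\beta_j\rangle=p_{ab}$ across the $s$ orthogonal outputs, one obtains that $\rho_\eta^{A'}$ carries comparable weight on $s$ orthogonal subspaces, from which $\mathrm{E}(\eta)=S(\rho_\eta^{A'})\ge\log s$. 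The delicate point is that orthogonality of the outputs $(K_0\otimes\mbb{I})\ket{\hat\beta_j}$ does not immediately give orthogonality of the \emph{marginals} $\rho_j$, because $K_0^{-1}K_1$ need not be normal; I expect the contraction constraint together with all eigenvalues lying on the unit circle to be exactly what tames the non-normality and also balances the weights, yielding $\log s$ rather than something smaller. I would emphasize that this bound is genuinely outside the reach of entanglement monotonicity or Holevo-type arguments: the entangling power of any binary-controlled gate is at most one ebit, and coherence forbids any classical information about $j$ from flowing, so the $\log s$ cost is a purely LOBC (no-interaction) phenomenon. As a consistency check, the Hermitian case of Theorem~\ref{Thm:cHermitian} has only two distinct phases ($\pm1$), for which the argument returns $\log 2=1$ ebit, matching the one-ebit protocol exactly.
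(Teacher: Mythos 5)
Your linearity reductions are correct as far as they go---the relations $(K_0\otimes\mbb{I})\ket{\hat\beta_j}=\tfrac{\sqrt{p_{ab}}}{\sqrt 2}\ket{j}\ket{r_{ab}}$ and $(K_1\otimes\mbb{I})\ket{\hat\beta_j}=e^{i\tau_j}(K_0\otimes\mbb{I})\ket{\hat\beta_j}$ do follow---but the step you flag as ``the crux'' is not merely delicate: it is false at the level of generality you work in, because every constraint you use is extracted from a \emph{single} outcome pair $(a,b)$, and single-outcome constraints cannot force $\mathrm{E}(\eta)\geq\log s$. Concretely, take $\ket{\eta}=\ket{\Phi^+}$ (one ebit), let Alice's branch be $A_0=\bra{0}^{A'}\mathrm{CNOT}^{AA'}$ (outcome $0$ of a computational-basis measurement on $A'$ after a CNOT from $A$), and let Bob's branch be $B_+=\bra{+}^{B'}\left(\op{0}{0}^{B'}\otimes\mbb{I}^B+\op{1}{1}^{B'}\otimes U_\tau^B\right)$. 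A direct computation gives $K_i=\tfrac{1}{\sqrt2}\bra{i}^{A'}$ and $\ket{\hat\beta_j}=\tfrac12\bigl(\ket{0}+e^{i\tau_j}\ket{1}\bigr)^{A'}\ket{j}^B$, and one checks that this branch exactly produces $U_c(\ket{+}\ket{j})$ with $p_{ab}=1/4$: your phase relation, contraction bound $K_0^\dagger K_0+K_1^\dagger K_1=\tfrac12\mbb{I}\leq\mbb{I}$, and trace balance all hold, for arbitrary $s$ and arbitrary distinct phases, with a one-ebit resource. Moreover the marginals $\rho_j\propto\op{\chi_j}{\chi_j}$ are pairwise \emph{non}-orthogonal rank-one operators in a two-dimensional space, so the orthogonal-sector structure you hope the distinct phases enforce simply does not materialize (also note $K_0$ is not invertible here, so the ``transfer operator'' $K_0^{-1}K_1$ is undefined). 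This branch does not contradict the theorem because the complementary branch ($A'$-outcome $1$) produces the uncorrectable gate $\op{0}{0}\otimes U_\tau+\op{1}{1}\otimes\mbb{I}$ when $U_\tau$ is non-hermitian; the obstruction is a \emph{cross-outcome} phenomenon that your framework discards from the start.

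The paper's proof supplies exactly the two ingredients missing here. First, it writes the analogous equations for all outcomes and both control values (its systems (E:$k$) and (F:$k$)), takes outer products of the equations for control values $k\neq k'$, and \emph{sums over Alice's outcomes} using the Kraus completeness relation $\sum_{i,a}\op{\alpha^*_{0,i,a}}{\alpha^*_{0,i,a}}=\mbb{I}^{A'}$; comparing the resulting identity from the $\ket{0}$-control block with the phase-shifted one from the $\ket{1}$-control block forces, since $e^{i(\tau_k-\tau_{k'})}\neq 1$, the orthogonality $\bra{\beta_{k,j,b}}\hat{\eta}\hat{\eta}^\dagger\ket{\beta_{k',j',b}}=0$ for all $k\neq k'$---a statement about Bob's Kraus components and the resource from which Alice's operators have been eliminated entirely. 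Second, rather than analyzing marginals, it converts this orthogonality into an entropy bound by constructing the Fourier-type measurement $M_{b,t}=\tfrac1s\sum_{j,k}\op{j}{\beta_{k,j,b}}e^{2\pi itk/s}$ on $B'$, verifying $\sum_{b,t}M_{b,t}^\dagger M_{b,t}=\mbb{I}^{B'}$, and showing that \emph{every} outcome collapses $\ket{\eta}$ into a state proportional to an $s$-dimensional maximally entangled state; monotonicity of the entanglement entropy under local measurements then yields $\mathrm{E}(\ket{\eta})\geq\log s$. Any repair of your argument must first import the completeness sum over outcomes (so that the failure of branches like the outcome-$1$ branch above is actually felt) before attempting any spectral or orthogonality reasoning.
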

\noindent Note that every controlled gate on $2\otimes s$ controlled from the 2-dimensional side is LU equivalent to $U_c$ in Eq. \eqref{Eq:controlled-unitary}, and generically, the phase angles in $U_\tau$ will be distinct.  The proof of Theorem \ref{Thm:2bys} is presented in Section \ref{protocols}.  It should also be noted that Theorem \ref{Thm:2bys} assumes a pure-state resource, and so the amount of ebits refers to the entanglement entropy of the pure state.  If one considers a mixed-state resource, then the entanglement bound in Theorem \ref{Thm:2bys} refers to the entanglement of formation, which is the average pure-state entanglement entropy minimized over all ensembles realizing the resource state.

What is remarkable about this result is that it not only quantifies a lower bound on nonlocal instantaneous computation in terms of ebits, but it also demonstrates an unbounded gap between LOCC and LOBC. Under interactive LOCC, this gate can easily be performed using two ebits: Alice teleports her system to Bob, he performs $U_c$ on both systems, and then he teleports Alice's qubit back to her.  Hence, Theorem \ref{Thm:2bys} accomplishes one of the main goals of the paper; a rigorous trade-off has been identified between interactive communication and entanglement consumption.


\section{Detailed Proofs and Protocols}\label{protocols}

\subsection{The Two-qubit ``Magic Basis'' and the Proof of Lemma \ref{Lem:1}}

\label{Sect-Lem:1}

The magic basis in two qubits \cite{Hill-1997a, Kraus-2001a} is the orthonormal family of states
\begin{align} \label{Eq:magicbasis}
\notag \ket{\Phi_0}&=\frac{1}{\sqrt{2}}(\ket{00}+\ket{11}),\notag\\
\ket{\Phi_1}&=\frac{-i}{\sqrt{2}}(\ket{00}-\ket{11})=-i\sigma_z\otimes\mbb{I}\ket{\Phi_0}\notag\\
\ket{\Phi_2}&=\frac{-i}{\sqrt{2}}(\ket{01}+\ket{10})=-i\sigma_x\otimes\mbb{I}\ket{\Phi_0},\notag\\
\ket{\Phi_3}&=\frac{1}{\sqrt{2}}(\ket{01}-\ket{10})=-i\sigma_y\otimes\mbb{I}\ket{\Phi_0}.
\end{align}
A number of convenient properties emerge when working in the magic basis, and we review them here since many of our proofs make use of them.

\begin{proposition}
\label{Prop:magic-1}
A unitary $\Omega$ is diagonal in the magic basis, $\Omega=\sum_{k=0}^3 e^{i\phi_k}\op{\Phi_k}{\Phi_k}$, if and only if it can be written as
\begin{equation}
\label{Eq:Pauli-exponential}
\Omega=e^{i(\alpha\sigma_x\otimes\sigma_x+\beta\sigma_y\otimes\sigma_y+\gamma\sigma_z\otimes\sigma_z)},
\end{equation}
where 
\begin{align}
	\phi_0&=\alpha-\beta+\gamma\label{Eq:magic-phase1}\\
	\phi_1&=-\alpha+\beta+\gamma\label{Eq:magic-phase2}\\
	\phi_2&=\alpha+\beta-\gamma\label{Eq:magic-phase3}\\
	\phi_3&=-\alpha-\beta-\gamma.\label{Eq:magic-phase4}
\end{align}
\end{proposition}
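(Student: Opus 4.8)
The plan is to use the fact that the three operators $\sigma_x\otimes\sigma_x$, $\sigma_y\otimes\sigma_y$, and $\sigma_z\otimes\sigma_z$ pairwise commute and that the magic basis of Eq.~\eqref{Eq:magicbasis} is precisely their common eigenbasis. First I would check commutativity directly: for $i\neq j$ one has $(\sigma_i\otimes\sigma_i)(\sigma_j\otimes\sigma_j)=\sigma_i\sigma_j\otimes\sigma_i\sigma_j$, and since $\sigma_i\sigma_j=-\sigma_j\sigma_i$ the two sign flips cancel across the tensor product, so these operators commute. Being Hermitian and mutually commuting, they are simultaneously diagonalizable, and hence the exponential in Eq.~\eqref{Eq:Pauli-exponential} is diagonal in any common eigenbasis.

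Next I would compute the action of each $\sigma_i\otimes\sigma_i$ on each $\ket{\Phi_k}$, recording the eigenvalue $\pm1$; this is a one-line calculation per state. For instance, $\sigma_x\otimes\sigma_x\ket{\Phi_0}=\ket{\Phi_0}$, $\sigma_y\otimes\sigma_y\ket{\Phi_0}=-\ket{\Phi_0}$, and $\sigma_z\otimes\sigma_z\ket{\Phi_0}=\ket{\Phi_0}$, and similarly for the others. The resulting eigenvalue triples $(+,-,+)$, $(-,+,+)$, $(+,+,-)$, $(-,-,-)$ for $\ket{\Phi_0},\dots,\ket{\Phi_3}$ are pairwise distinct, so the joint eigenspaces are one-dimensional and the magic basis is indeed the (essentially unique) simultaneous eigenbasis. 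Feeding these signs into the exponent gives $\Omega\ket{\Phi_k}=e^{i\phi_k}\ket{\Phi_k}$ with the $\phi_k$ of Eqs.~\eqref{Eq:magic-phase1}--\eqref{Eq:magic-phase4}, which settles the ``if'' direction.

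For the ``only if'' direction I would read Eqs.~\eqref{Eq:magic-phase1}--\eqref{Eq:magic-phase4} as a linear system in $(\alpha,\beta,\gamma)$. Any three of the four equations fix the parameters uniquely, e.g. $\gamma=(\phi_0+\phi_1)/2$, $\alpha=(\phi_0+\phi_2)/2$, and $\beta=(\phi_1+\phi_2)/2$, and these reconstruct the exponential form. The main obstacle, and the only point requiring care, is the consistency of the remaining fourth equation: summing all four yields $\phi_0+\phi_1+\phi_2+\phi_3=0$, equivalently $\det\Omega=1$. I expect this determinant constraint to be the crux of the argument. It holds automatically for the exponential form, since each factor $e^{i\theta\,\sigma_i\otimes\sigma_i}$ has unit determinant because $\tr(\sigma_i\otimes\sigma_i)=0$; thus the stated equivalence is to be understood for $\Omega\in SU(4)$, or equivalently up to an irrelevant global phase. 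With this caveat in place, the constructed $(\alpha,\beta,\gamma)$ realize $\Omega$ exactly, completing the proof.
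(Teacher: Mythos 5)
Your proposal is correct and follows essentially the same approach as the paper: both arguments rest on the pairwise commutativity of the operators $\sigma_k\otimes\sigma_k$ and on the magic states being their joint eigenstates, with the phases then matched against Eqs.~\eqref{Eq:magic-phase1}--\eqref{Eq:magic-phase4}. Your extra care in the ``only if'' direction is well placed: the consistency constraint $\phi_0+\phi_1+\phi_2+\phi_3\equiv 0 \pmod{2\pi}$, i.e.\ $\det\Omega=1$ (automatic for Eq.~\eqref{Eq:Pauli-exponential} since each $\sigma_k\otimes\sigma_k$ is traceless), is genuine and is glossed over in the paper's terser proof, so the proposition must indeed be read up to a global phase --- harmlessly so for the paper's later applications, since only phase-shift-invariant combinations such as $\alpha=\tfrac{1}{4}(\phi_0-\phi_1+\phi_2-\phi_3)$ are used in the proof of Lemma~\ref{Lem:1}.
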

\begin{proof}
Note that the $\sigma_k\otimes\sigma_k$ form a pairwise commuting set for $k=x,y,z$.  Thus, we can write
\begin{align}
\label{Eq:Pauli-exponential2}
e^{i(\alpha\sigma_x\otimes\sigma_x+\beta\sigma_y\otimes\sigma_y+\gamma\sigma_z\otimes\sigma_z)}&=e^{i\alpha\sigma_x\otimes\sigma_x}e^{i\beta\sigma_y\otimes\sigma_y}e^{i\gamma\sigma_z\otimes\sigma_z}.
\end{align}
Using the identity $e^{i\theta\sigma_k\otimes\sigma_k}=\cos\theta\mbb{I}+i\sin\theta\sigma_k\otimes\sigma_k$ and the fact that each magic state is an eigenstate of $\sigma_k\otimes\sigma_k$, it follows that 
\[0=\sum_{k=0}^3e^{i\phi_k}\op{\Phi_k}{\Phi_k}-e^{i(\alpha\sigma_x\otimes\sigma_x+\beta\sigma_y\otimes\sigma_y+\gamma\sigma_z\otimes\sigma_z)},\]
iff the $\phi_k$ and $\alpha,\beta,\gamma$ are related according to the above relations.
\end{proof}

\begin{proposition}
\label{Prop:magic-2}
A unitary having the form of Eq. \eqref{Eq:Pauli-exponential} belongs to the Clifford group if and only if $\alpha,\beta,\gamma$ are all multiples of $\pi/4$.
\end{proposition}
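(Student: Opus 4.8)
The plan is to prove both directions by reducing membership in $\mc{C}_2$ to the action of $\Omega$ on a generating set of the Pauli group. Since the conjugation map $g\mapsto \Omega g\Omega^\dagger$ is multiplicative and preserves the overall scalar phase, $\Omega\in\mc{C}_2$ holds if and only if $\Omega$ sends each of the four generators $\sigma_x\otimes\mbb{I}$, $\sigma_z\otimes\mbb{I}$, $\mbb{I}\otimes\sigma_x$, $\mbb{I}\otimes\sigma_z$ into $\mc{P}_2$. Throughout I would use the commuting factorization $\Omega=e^{i\alpha\sigma_x\otimes\sigma_x}e^{i\beta\sigma_y\otimes\sigma_y}e^{i\gamma\sigma_z\otimes\sigma_z}$ from Eq.~\eqref{Eq:Pauli-exponential2}, together with the angle-doubling identity: if a Pauli $H$ with $H^2=\mbb{I}$ anticommutes with a Pauli $P$, then $e^{i\theta H}Pe^{-i\theta H}=\cos(2\theta)\,P+i\sin(2\theta)\,HP$, while if $H$ commutes with $P$ the conjugation leaves $P$ fixed.

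For sufficiency, suppose $\alpha,\beta,\gamma$ are all multiples of $\pi/4$. Using $e^{i\theta\sigma_k\otimes\sigma_k}=\cos\theta\,\mbb{I}+i\sin\theta\,\sigma_k\otimes\sigma_k$, each factor is then either $\pm\mbb{I}$, a Pauli multiple $\pm i\,\sigma_k\otimes\sigma_k$, or $\tfrac{1}{\sqrt2}(\pm\mbb{I}+i\,\sigma_k\otimes\sigma_k)$. The first two clearly lie in $\mc{C}_2$, and the third conjugates any Pauli $P$ either to itself (when $P$ commutes with $\sigma_k\otimes\sigma_k$) or to $\pm i\,\sigma_k\otimes\sigma_k\,P\in\mc{P}_2$ (when $P$ anticommutes), so it too is Clifford. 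Because $\mc{C}_2$ is a group, the product $\Omega$ is Clifford.

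For necessity, I would conjugate a single generator and read off the constraint. Since $\sigma_x\otimes\mbb{I}$ commutes with $\sigma_x\otimes\sigma_x$, and the latter also commutes with the other two factors, the $\alpha$-factor acts trivially on the conjugate; applying the angle-doubling identity through the $\gamma$- and then the $\beta$-factor yields $\Omega(\sigma_x\otimes\mbb{I})\Omega^\dagger$ as a real combination of the four distinct Paulis $\sigma_x\otimes\mbb{I}$, $\sigma_z\otimes\sigma_y$, $\sigma_y\otimes\sigma_z$, $\mbb{I}\otimes\sigma_x$, with coefficients $\cos2\beta\cos2\gamma$, $\sin2\beta\cos2\gamma$, $-\cos2\beta\sin2\gamma$, and $\sin2\beta\sin2\gamma$ — precisely the entries of the outer product of $(\cos2\beta,\sin2\beta)$ with $(\cos2\gamma,\sin2\gamma)$. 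Being Hermitian and unitary, this expression lies in $\mc{P}_2$ only when exactly one coefficient is nonzero (and hence equal to $\pm1$), which forces both $2\beta$ and $2\gamma$ to be multiples of $\pi/2$, i.e. $\beta,\gamma\in\tfrac{\pi}{4}\mbb{Z}$. A symmetric computation with $\sigma_y\otimes\mbb{I}$ (which commutes with $\sigma_y\otimes\sigma_y$ and so decouples $\beta$) forces $\alpha,\gamma\in\tfrac{\pi}{4}\mbb{Z}$, and the two conditions together give all three angles in $\tfrac{\pi}{4}\mbb{Z}$.

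The routine part is the conjugation bookkeeping, and the only thing requiring care is tracking which of the three terms $\sigma_x\otimes\sigma_x,\sigma_y\otimes\sigma_y,\sigma_z\otimes\sigma_z$ each single-qubit generator anticommutes with. This is tamed by the observation that every single-qubit Pauli generator commutes with exactly one of these three terms, so one angle always decouples while the remaining two are each doubled; the conceptual heart is then the elementary fact that the four product-coefficients collapse to a single Pauli exactly when both surviving angles are quarter-$\pi$ multiples.
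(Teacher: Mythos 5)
Your proposal is correct, and its sufficiency half coincides with the paper's argument: for quarter-$\pi$ angles each commuting factor $e^{in\pi/4\,\sigma_k\otimes\sigma_k}$ is itself Clifford, hence so is the product (your case analysis of the factors is in fact a bit more careful than the paper's displayed one, which omits the $\pm i\,(\sigma_i\sigma_k)\otimes(\sigma_j\sigma_k)$ outcome for odd multiples, though both reach the same conclusion). Your necessity argument, however, takes a genuinely different route, and it is the stronger one. The paper's converse works factor-by-factor: it assumes $\Omega\notin\mc{C}_2$, locates a single factor $e^{i\theta\sigma_k\otimes\sigma_k}$ that conjugates some anticommuting Pauli outside $\mc{P}_2$, and concludes $\theta\notin\tfrac{\pi}{4}\mbb{Z}$. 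Read literally, that establishes ``$\Omega\notin\mc{C}_2$ implies some angle is not a multiple of $\pi/4$,'' which is the contrapositive of sufficiency rather than necessity; by itself it does not exclude the possibility that non-Clifford factors multiply to a Clifford product (which can happen, e.g. $e^{i\pi/8\,\sigma_z\otimes\sigma_z}\cdot e^{i\pi/8\,\sigma_z\otimes\sigma_z}\in\mc{C}_2$). Your computation closes exactly this loophole: conjugating $\sigma_x\otimes\mbb{I}$ through the \emph{entire} product yields the four-term Pauli expansion whose coefficients form the outer product of $(\cos 2\beta,\sin 2\beta)$ with $(\cos 2\gamma,\sin 2\gamma)$ (your coefficients and signs check out), and orthogonality of distinct Paulis forces exactly one entry to survive, giving $\beta,\gamma\in\tfrac{\pi}{4}\mbb{Z}$; the companion computation with $\sigma_y\otimes\mbb{I}$ pins down $\alpha$. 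The price of your route is the conjugation bookkeeping; what it buys is a necessity direction that genuinely uses the hypothesis $\Omega\in\mc{C}_2$ on the full product rather than on its individual factors, so it is the version worth keeping.
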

\begin{proof}
Let $\Omega$ have the form of Eq. \eqref{Eq:Pauli-exponential}. Suppose that $\alpha,\beta,\gamma$ are all multiples of $\pi/4$.  Then for any $i,j,k$ we will have
\begin{align}
&e^{i\frac{n\pi}{4}\sigma_k\otimes\sigma_k}(\sigma_i\otimes\sigma_j)e^{-i\frac{n\pi}{4}\sigma_k\otimes\sigma_k}\notag\\
&=(\cos(\tfrac{n\pi}{4})\mbb{I}\otimes\mbb{I}+i\sin(\tfrac{n\pi}{4})\sigma_k\otimes\sigma_k)\notag\\
&\qquad\times(\sigma_i\otimes\sigma_j)(\cos(\tfrac{n\pi}{4})\mbb{I}\otimes\mbb{I}-i\sin(\tfrac{n\pi}{4})\sigma_k\otimes\sigma_k)\notag\\
&=\begin{cases}\sigma_i\otimes\sigma_j\quad\text{if $\sigma_i\otimes\sigma_j$ commutes with $\sigma_k\otimes\sigma_k$}\\
\pm (\sigma_i\otimes\sigma_j)\quad\text{if $\sigma_i\otimes\sigma_j$ anti-commutes with $\sigma_k\otimes\sigma_k$}
\end{cases}.
\end{align} 
Hence from Eq. \eqref{Eq:Pauli-exponential2}, we see that
\begin{equation}
\label{Eq:Paui-anti-commute}
e^{i\theta\sigma_k\otimes\sigma_k}(\sigma_i\otimes\sigma_j)e^{-i\theta\sigma_k\otimes\sigma_k}\in\mc{P}_2.
\end{equation}
Conversely, if $\Omega$ is not in the Clifford group, then there must be some $i,j,k$ for which Eq. \eqref{Eq:Paui-anti-commute} does not hold.  This means that $\sigma_i\otimes\sigma_j$ anti-commutes with $\sigma_k\otimes\sigma_k$.  Thus,
\begin{align}
e^{i\theta\sigma_k\otimes\sigma_k}(\sigma_i\otimes\sigma_j)e^{-i\theta\sigma_k\otimes\sigma_k}&=(\sigma_i\otimes\sigma_j)e^{2\theta i\sigma_k\otimes\sigma_k}\notag\\
&=(\sigma_i\otimes\sigma_j)(\cos(2\theta)\mbb{I}\otimes\mbb{I}\notag\\
&\quad+i\sin(2\theta)\sigma_k\otimes\sigma_k),
\end{align}
which clearly belongs to $\mc{P}_2$ whenever $\theta$ is an integer  multiple of $\pi/4$.  As this would be a contradiction, we conclude that $\theta$ cannot be an integer multiple of $\pi/4$.
\end{proof}

\begin{proposition}[\cite{Kraus-2001a}]
\label{Prop:magic-3}
Every two-qubit unitary $U$ is locally equivalent to a matrix diagonal in the magic basis.  That is $U$ can be decomposed as
\begin{equation}
\label{Eq:Magic-state-decompose}
U=(R_1\otimes S_1)\Omega(R_2\otimes S_2),
\end{equation}
where $\Omega$ is diagonal in the magic basis and the $R_i\otimes S_i$ are local unitaries.
\end{proposition}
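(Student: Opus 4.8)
The plan is to pass to the magic basis and reduce the statement to a matrix factorization over $U(4)$. The essential structural input is that, in the magic basis of Eq.~\eqref{Eq:magicbasis}, the local unitary group is represented exactly by the real special orthogonal matrices: every product $R\otimes S$ with $R,S\in SU(2)$ has a matrix in this basis lying in $SO(4)$, and conversely every element of $SO(4)$ arises this way (this is the familiar double cover $SU(2)\times SU(2)\cong\mathrm{Spin}(4)$ of $SO(4)$). I would first verify this by checking that each single-qubit generator $\sigma_x\otimes\mbb{I},\sigma_y\otimes\mbb{I},\dots$ and $\mbb{I}\otimes\sigma_x,\dots$ is represented in the magic basis by a real antisymmetric matrix, so that the one-parameter subgroups they generate are real orthogonal; exponentiating and multiplying then sweeps out all of $SO(4)$. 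With this correspondence in hand, writing $\wt U$ for the matrix of $U$ in the magic basis, the proposition is equivalent to the claim that any $\wt U\in U(4)$ admits a decomposition $\wt U=O_1 D O_2$ with $O_1,O_2\in SO(4)$ and $D$ diagonal and unitary.

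To produce this decomposition I would examine the symmetric unitary matrix $M=\wt U^{T}\wt U$. If $\wt U=O_1 D O_2$ as above, then $O_1^{T}O_1=\mbb{I}$ forces $M=O_2^{T}D^{2}O_2$, so $O_2$ must be a real orthogonal matrix that diagonalizes $M$. Such a diagonalization always exists: writing $M=A+iB$ with $A,B$ real symmetric, the conditions $MM^{\dagger}=\mbb{I}$ and $M=M^{T}$ force $A^{2}+B^{2}=\mbb{I}$ and $AB=BA$, and two commuting real symmetric matrices are simultaneously diagonalized by a single real orthogonal matrix (choosing a common eigenbasis on each shared eigenspace handles degeneracies). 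This yields $M=O\Lambda O^{T}$ with $O\in O(4)$ and $\Lambda$ a diagonal unitary. I would then set $O_2=O^{T}$, choose a diagonal unitary square root $D=\Lambda^{1/2}$, and define $O_1=\wt U\,O_2^{T}D^{-1}=\wt U O D^{-1}$.

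It remains to check that $O_1$ is genuinely real orthogonal and to dispose of determinant signs. A direct computation using $O^{T}MO=\Lambda=D^{2}$ gives $O_1^{T}O_1=\mbb{I}$; since $O_1$ is also unitary as a product of unitaries, combining $O_1^{\dagger}O_1=\mbb{I}$ with $O_1^{T}O_1=\mbb{I}$ yields $\overline{O_1}=O_1$, so $O_1$ is real. Thus $O_1,O_2\in O(4)$. If either factor has determinant $-1$, I would absorb a reflection $F=\diag(-1,1,1,1)$ into the diagonal middle factor — for instance replacing $O_1 D$ by $(O_1 F)(F D)$, which leaves $FD$ diagonal while flipping $\det O_1$, and performing the mirror manipulation on the right — to bring both orthogonal factors into $SO(4)$. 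Translating back through the correspondence of the first paragraph, $O_1$ and $O_2$ become local unitaries $R_1\otimes S_1$ and $R_2\otimes S_2$, while $D$ becomes an operator $\Omega$ diagonal in the magic basis, giving $U=(R_1\otimes S_1)\Omega(R_2\otimes S_2)$.

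I expect the main obstacle to be the first step: establishing cleanly that real orthogonal matrices in the magic basis correspond precisely to local unitaries, i.e. the reality property of this particular basis. Everything downstream is routine linear algebra — an Autonne--Takagi-type diagonalization of a symmetric unitary together with the determinant bookkeeping — and the simultaneous diagonalization of $A$ and $B$ survives repeated eigenvalues. The only genuine care is in the generator computation and in verifying that the reflection trick leaves the middle factor diagonal, both of which are direct but must be carried out explicitly.
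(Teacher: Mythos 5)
Your proof is correct. Note, however, that the paper itself offers no proof of this proposition: it is stated with an attribution, and the reader is referred to Ref.~\cite{Kraus-2001a} for the details. Your argument is essentially a reconstruction of the proof in that reference: Kraus and Cirac likewise pass to the magic basis, use the two-to-one correspondence between $SU(2)\times SU(2)$ and $SO(4)$, and extract the real orthogonal factors by diagonalizing the symmetric unitary matrix built from $\wt U$ and its transpose with a real orthogonal matrix, the diagonal middle factor absorbing all phases. So the two approaches coincide; yours just spells out the Autonne--Takagi step and the determinant bookkeeping explicitly. One slip worth fixing: you claim the Hermitian operators $\sigma_k\otimes\mbb{I}$ and $\mbb{I}\otimes\sigma_k$ are represented in the magic basis by real antisymmetric matrices, but a Hermitian operator can never have a real antisymmetric matrix in an orthonormal basis (such a matrix is skew-Hermitian). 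What is true, and what your argument actually needs, is that they are represented by $i$ times real antisymmetric matrices --- equivalently, the skew-Hermitian generators $i\,\sigma_k\otimes\mbb{I}$ and $i\,\mbb{I}\otimes\sigma_k$ are real antisymmetric --- so that the one-parameter groups $e^{-i\theta\,\sigma_k\otimes\mbb{I}}$ are real orthogonal, and the six independent antisymmetric generators together with connectedness give surjectivity onto $SO(4)$. With that correction, the remaining steps (simultaneous orthogonal diagonalization of the commuting real symmetric parts of $\wt U^{T}\wt U$, realness of $O_1$ from $O_1^{\dagger}O_1=O_1^{T}O_1=\mbb{I}$, and the reflection trick for the determinant signs) are all sound.
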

\noindent A detailed proof is given in Ref. \cite{Kraus-2001a}.  We next make the connection between the magic basis and a gate's ability to generate entanglement.  Here we say that $U$ is non-entangling if $U\ket{\alpha}\ket{\beta}$ is a product state for every $\ket{\alpha}\ket{\beta}$.

\begin{proposition}[\cite{Hill-1997a}]
\label{Prop:magic-4}
A two-qubit unitary is non-entangling iff, up to an overall phase, it is real in the magic basis.  
\end{proposition}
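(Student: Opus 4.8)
The plan is to translate the non-entangling condition into a statement about a single quadratic form on the magic-basis coefficients, and then invoke an elementary algebraic fact. The starting point is the concurrence identity attached to the magic basis (Hill--Wootters \cite{Hill-1997a}): if a two-qubit state is expanded as $\ket{\psi}=\sum_{k=0}^3 c_k\ket{\Phi_k}$, then its concurrence equals $|\sum_k c_k^2|$, because the magic states are (up to a common sign) invariant under the spin-flip $\sigma_y\otimes\sigma_y$ combined with complex conjugation. Consequently $\ket{\psi}$ is a product state if and only if $\sum_k c_k^2=0$. Writing $U$ as a matrix $U_{jk}=\braket{\Phi_j}{U}{\Phi_k}$ in the magic basis and setting $d_j=\sum_k U_{jk}c_k$, a one-line computation gives $\sum_j d_j^2 = c^{\mathsf T}(U^{\mathsf T}U)\,c$. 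Thus ``$U$ is non-entangling'' becomes the purely algebraic condition that the quadratic form $c\mapsto c^{\mathsf T}(U^{\mathsf T}U)\,c$ vanishes on the whole zero locus of $Q(c)=\sum_k c_k^2$.

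The reverse direction is then immediate. If $U$ is real in the magic basis (after absorbing the overall phase), then unitarity forces it to be a real orthogonal matrix, so $U^{\mathsf T}U=\mbb{I}$ and $\sum_j d_j^2=\sum_k c_k^2$. Hence $U$ carries every product state to a product state.

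For the forward direction I would set $M=U^{\mathsf T}U$, which is symmetric, and use the hypothesis that $c^{\mathsf T}Mc$ vanishes wherever $Q$ does. The key step is the algebraic lemma that, for a nondegenerate quadratic form $Q$ in $n\ge 3$ variables (which is irreducible as a polynomial), any quadratic form vanishing on the zero set $V(Q)$ must be a scalar multiple of $Q$; this follows from the Nullstellensatz together with primality of the ideal $(Q)$, and a degree count. Applying it with $n=4$ gives $M=\lambda\mbb{I}$, i.e. $U^{\mathsf T}U=\lambda\mbb{I}$ for some $\lambda\in\mbb{C}$. Combining this with unitarity $U^\dagger U=\mbb{I}$ yields $U=\lambda\overline{U}$; conjugating and substituting shows $|\lambda|=1$, so writing $\lambda=e^{2i\theta}$ one checks that $V=e^{-i\theta}U$ satisfies $\overline{V}=V$. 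Thus $U$ is real in the magic basis up to the overall phase $e^{i\theta}$, as claimed.

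The main obstacle is this forward-direction algebraic step: one must exclude the possibility that $c^{\mathsf T}Mc$ agrees with $Q$ only on the quadric without being proportional to it. This is exactly where irreducibility of $Q$ in four variables is essential---it fails for two variables, where $c_0^2+c_1^2$ splits---and it is what lets us upgrade a condition on the zero set to an identity of forms. Everything else, namely the concurrence identity and the extraction of the phase from the pair $U^{\mathsf T}U=\lambda\mbb{I}$ and $U^\dagger U=\mbb{I}$, is routine.
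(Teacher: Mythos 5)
Your proof is correct, but it takes a genuinely different route from the paper's. The paper first invokes its Proposition \ref{Prop:magic-3} (the Kraus--Cirac decomposition) to reduce the claim to an operator $\Omega$ diagonal in the magic basis, shows by a direct Pauli expansion that every product unitary is real in the magic basis, and then argues that the non-entangling condition on a diagonal $\Omega=\sum_k e^{i\phi_k}\op{\Phi_k}{\Phi_k}$ forces the phases to agree modulo $\pi$. You bypass the decomposition entirely: non-entangling becomes the statement that the quadratic form $c\mapsto c^{\mathsf T}(U^{\mathsf T}U)c$ vanishes on the irreducible quadric $\sum_k c_k^2=0$, the Nullstellensatz plus a degree count gives $U^{\mathsf T}U=\lambda\mbb{I}$, and unitarity then extracts the overall phase. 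Each approach has its merits. The paper's argument is elementary (no algebraic geometry) and reuses machinery that it needs anyway for Lemma \ref{Lem:1}; yours is more self-contained and is rigorous exactly where the paper is terse --- the paper's assertion that $\sum_k c_k^2 e^{2i\phi_k}=0$ on the quadric ``requires that $\phi_k-\phi_0=\pm\pi$'' is precisely a diagonal instance of your vanishing-on-the-quadric lemma, stated there without proof. Your route also delivers a cleaner invariant characterization (non-entangling $\Leftrightarrow$ complex orthogonal up to phase in the magic basis) that generalizes verbatim to nondegenerate quadrics in any dimension $n\geq 3$. One small point worth making explicit: the physical hypothesis only gives vanishing of $c^{\mathsf T}(U^{\mathsf T}U)c$ on \emph{normalized} vectors in the quadric, but homogeneity of both forms immediately upgrades this to the full zero locus, so the lemma applies.
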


\begin{proof}
From Proposition \ref{Prop:magic-3} we write
\begin{equation}
\label{Eq:Magic-state-decompose-2}
U=(R_1\otimes S_1)\Omega(R_2\otimes S_2)=(R_1\otimes\mbb{I})(\mbb{I}\otimes S_1)\Omega(R_2\otimes\mbb{I})(\mbb{I}\otimes S_2).
\end{equation}
Our argument will involve first showing that every product unitary is real in the magic basis.  Since all product unitaries are non-entangling, Eq. \eqref{Eq:Magic-state-decompose-2} implies that $U$ is non-entangling iff $\Omega$ is non-entangling.  With it having been established that every product unitary is real, the proposition will then follow by showing that $U$ is non-entangling iff $\Omega$ is real in the magic basis.

Let us first consider any operator of the form $\mbb{I}\otimes V$ (or alternatively $V\otimes\mbb{I}$), where $V$ is an arbitrary unitary.  Up to an overall phase, we can always express $V=a\mbb{I}+i\vec{b}\cdot\vec{\sigma}$ with $a\geq 0$ and $\vec{b}$ a vector with real components.  Then
\begin{align}
&\bra{\Phi_i}\mbb{I}\otimes V\ket{\Phi_j}\notag\\
&=a\ip{\Phi_i}{\Phi_j}+\sum_{l=1}^3ib_l\bra{\Phi_i}\mbb{I}\otimes\sigma_l\ket{\Phi_j}\notag\\
&=a\delta_{ij}+\begin{cases}\sum_{l=1}^3ib_l\bra{\Phi_0}\sigma_i\sigma_j\otimes\sigma_l\ket{\Phi_0}\qquad\text{if $i,j>0$}\\
\sum_{l=1}^3b_l\bra{\Phi_0}\sigma_j\otimes\sigma_l\ket{\Phi_0}\qquad\text{if $i=0, j>0$}\\
\sum_{l=1}^3-b_l\bra{\Phi_0}\sigma_i\otimes\sigma_l\ket{\Phi_0}\qquad\text{if $j=0,i>0$}\\
0\qquad\text{if $i=j=0$}.
\end{cases}.
\end{align}
Since $\sigma_i\sigma_j=i\epsilon_{ijk}\sigma_k$, we see that $\mbb{I}\otimes V$ is real when expressed in the magic basis.  Let us write $\Omega=\sum_{k=0}^3e^{i\phi_k}\op{\Phi_k}{\Phi_k}$.  It is straightforward to show that $\ket{\psi}=\sum_{k=0}^3 c_k\ket{\Phi_k}$ is a product state iff $\sum_{k=0}^3c_k^2=0$ \cite{Hill-1997a}.  Since $\Omega\ket{\psi}=\sum_{k=0}^3 c_ke^{i\phi_k}\ket{\Phi_k}$, we have that $\Omega$ is non-entangling iff 
\[0=\sum_{k=0}^3c^2_ke^{i2\phi_k}=e^{i2\phi_0}\sum_{k=0}^3c^2_ke^{i2(\phi_k-\phi_0)}\] whenever $\sum_{k=0}^3c_k^2=0$.  This requires that $\phi_k-\phi_0=\pm \pi$ for all $k$.  In other words, up to an overall phase, $\Omega$ is real.

\end{proof}

We now turn to the proof of Lemma \ref{Lem:1}.  It will make use of one more technical fact.
\begin{proposition}
\label{Prop:Paul-real}
Let $\sigma_k$ be any Pauli operator and $V$ an arbitrary one-qubit unitary.  Then there exists some complex phase $e^{-i\varphi}$ such that $e^{-i\varphi}\tr[V\sigma_i V^\dagger \sigma_k]$ is real for all $i=1,2,3$.  
\end{proposition}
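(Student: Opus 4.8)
The plan is to exploit the fact that conjugation by a one-qubit unitary leaves the real span of $\{\sigma_1,\sigma_2,\sigma_3\}$ invariant. First I would write $V=e^{i\chi}\wt V$ with $\wt V\in SU(2)$ and observe that the overall phase cancels, $V\sigma_i V^\dagger=\wt V\sigma_i\wt V^\dagger$, so there is no loss in assuming $V\in SU(2)$. Under the two-to-one homomorphism $SU(2)\to SO(3)$, conjugation by $\wt V$ realizes a rotation $R$, giving $V\sigma_i V^\dagger=\sum_{m=1}^3 R_{mi}\sigma_m$ with coefficients $R_{mi}\in\R$ for each $i=1,2,3$. Equivalently, and more elementarily, one argues that $V\sigma_i V^\dagger$ is Hermitian and traceless, hence automatically a \emph{real} linear combination of the three traceless Paulis; either route supplies the key structural input.

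Next I would take the trace against $\sigma_k$. For $k\in\{1,2,3\}$, using $\tr[\sigma_m\sigma_k]=2\delta_{mk}$ yields $\tr[V\sigma_i V^\dagger\sigma_k]=2R_{ki}\in\R$, while the case $\sigma_k=\mbb{I}$ gives $\tr[V\sigma_i V^\dagger]=\tr[\sigma_i]=0$, again real. Thus each of the three numbers $\tr[V\sigma_i V^\dagger\sigma_k]$ is already real, so a \emph{single} phase certifying the claim for all $i$ simultaneously may be taken to be $e^{-i\varphi}=1$. A one-line cross-check confirms reality without the rotation picture: writing $M:=V\sigma_i V^\dagger$, both $M$ and $\sigma_k$ are Hermitian, so $\overline{\tr[M\sigma_k]}=\tr[(M\sigma_k)^\dagger]=\tr[\sigma_k M]=\tr[M\sigma_k]$.

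Because the argument produces reality outright, I do not expect a genuine analytic obstacle; the only point needing care is the assertion that the expansion coefficients $R_{mi}$ are real, which is precisely the content of the $SU(2)\to SO(3)$ correspondence (or of the Hermiticity observation). I would, however, flag one adjacent subtlety relevant to the surrounding magic-basis machinery: if the object appearing in the application instead carries a transpose rather than an adjoint, then $V\sigma_i V^{T}$ need not be Hermitian and the common phase $e^{-i\varphi}$ becomes genuinely nontrivial. In that variant I would invoke the spin-flip identity $\overline{V}=\sigma_y V\sigma_y$, which for $V\in SU(2)$ gives $V^{T}=\sigma_y V^\dagger\sigma_y$ and hence $V\sigma_i V^{T}=V\sigma_i\sigma_y V^\dagger\sigma_y$, and then track how a single $\sigma_y$-induced phase can be pulled out uniformly in $i$ — this is the step I would expect to carry the real work were the transpose version the intended statement.
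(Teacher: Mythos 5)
Your proof is correct, and it actually establishes a strict sharpening of the proposition: each number $\tr[V\sigma_i V^\dagger \sigma_k]$ is \emph{already} real, so the certifying phase can simply be taken to be $e^{-i\varphi}=1$. Both your argument and the paper's rest on the same structural fact---unitary conjugation preserves the real span of the Paulis---but the paper uses it in a weaker, unjustified form: it cycles the trace to $\tr[\sigma_i V^\dagger \sigma_k V]$ and then \emph{asserts} that $V^\dagger\sigma_k V=e^{i\varphi}\hat{n}\cdot\vec{\sigma}$ with $\hat{n}$ a real unit vector, leaving a possibly nontrivial phase in the conclusion. Your two routes supply exactly the missing justification and show that phase is spurious: the $SU(2)\to SO(3)$ expansion $V\sigma_i V^\dagger=\sum_m R_{mi}\sigma_m$ with $R$ real, and---more elementary still---the observation that $M=V\sigma_i V^\dagger$ and $\sigma_k$ are Hermitian, so $\overline{\tr[M\sigma_k]}=\tr[(M\sigma_k)^\dagger]=\tr[\sigma_k M]=\tr[M\sigma_k]$. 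You also handle $\sigma_k=\mbb{I}$, which the paper's traceless form $e^{i\varphi}\hat{n}\cdot\vec{\sigma}$ silently excludes. One dividend of your rotation picture is worth recording explicitly: $(R_{k1},R_{k2},R_{k3})$ is a row of an orthogonal matrix, hence a unit vector, so the three traces cannot all vanish. This non-vanishing is tacitly needed when Proposition \ref{Prop:Paul-real} is invoked in the proof of Lemma \ref{Lem:1}, where realness of products such as $-ie^{i\varphi}e^{i(\phi_0-\phi_k)}\tr[V\sigma_i V^\dagger\sigma_k]$ for all $i$ is converted into a constraint on the phases---an inference that would be vacuous if all three traces were zero; neither the paper's statement nor its proof records this, while your version gives it for free. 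Finally, the transpose variant you flag as the potential hard case does not in fact arise: the quantity appearing in Lemma \ref{Lem:1} genuinely carries $V^\dagger$, not $V^{T}$, so no spin-flip identity is required.
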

\begin{proof}
We write $\tr[V\sigma_i V^\dagger \sigma_k]=\tr[\sigma_i V^\dagger \sigma_k V]$.  Under unitary conjugation $\sigma_k$ transforms to some other unitary $e^{i\varphi}\hat{n}\cdot\vec{\sigma}$, where $\hat{n}$ is a unit vector with real components and $e^{i\varphi}$ is an overall phase.  Hence $e^{-i\varphi}\tr[V\sigma_i V^\dagger \sigma_k]$ is a real number.

\end{proof}

\newtheorem*{Lem:1proof}{Lemma \ref{Lem:1}}
\begin{Lem:1proof}
$U\in\;${\upshape $\textbf{L}$} if and only if there exists local unitaries $R_n\otimes S_n$ such that $(R_1\otimes S_1)U(R_2\otimes S_2)\in\mc{C}_2$.
\end{Lem:1proof}

\begin{proof}
From the definitions it is clear that if $(R_1\otimes S_1)U(R_2\otimes S_2)\in\mc{C}_2$ then $U\in\;${\upshape $\textbf{L}$}.  To prove the converse, observe that if $U\in\textbf{L}$ then there exists some unitary $V$ such that $\Omega(V \sigma_i V^\dagger\otimes\mbb{I})\Omega^\dagger$ is a product unitary, where, by Proposition \ref{Prop:magic-3},
\begin{equation}
\Omega=\sum_{k=0}^3e^{i\phi_k}\op{\Phi_k}{\Phi_k}
\end{equation} 
is obtained from $U$ by local unitaries.  Hence, it suffices to show that $\Omega(V \sigma_i V^\dagger\otimes\mbb{I})\Omega^\dagger$ being a product unitary for all $i$ implies $\Omega\in\mc{C}_2$.  From this it will follow that  $(R_1\otimes S_1)U(R_2\otimes S_2)\in\mc{C}_2$ for some local unitaries $R_n\otimes S_n$.

If $\Omega(V \sigma_i V^\dagger\otimes\mbb{I})\Omega^\dagger$ is a product unitary it is non-entangling and therefore, by Proposition \ref{Prop:magic-4}, there exists some phase $e^{i\varphi}$ such that
\begin{align}
e^{i\varphi}&\bra{\Phi_j}\Omega(V \sigma_i V^\dagger\otimes\mbb{I})\Omega^\dagger\ket{\Phi_k}\notag\\
&=e^{i\varphi}e^{i(\phi_{j}-\phi_k)}\bra{\Phi_j}V\sigma_i V^\dagger\otimes\mbb{I}\ket{\Phi_k}
\end{align}
is real for each $j$ and $k$.  Under what conditions is this true?  Note that when $j=k$ the component vanishes, and so it suffices to just consider the case of $j\not=k$.  First, suppose that $j=0$ and $k>0$.  Then
\begin{align}
&e^{i\varphi}e^{i(\phi_{j}-\phi_k)}\bra{\Phi_j}V\sigma_i V^\dagger\otimes\mbb{I}\ket{\Phi_k}\notag\\
&=-ie^{i\varphi}e^{i(\phi_{0}-\phi_k)}\bra{\Phi_0}V\sigma_i V^\dagger\sigma_k\otimes\mbb{I}\ket{\Phi_0}\notag\\
&=-ie^{i\varphi}e^{i(\phi_{0}-\phi_k)}\tr[V\sigma_i V^\dagger\sigma_k]\quad\forall i=1,2,3.
\end{align}
If these terms are real for all $i$, then by Proposition \ref{Prop:Paul-real}, there must exist some phase $e^{i\varphi_k}$ such that $ie^{i\varphi}e^{i(\phi_{0}-\phi_k)}e^{i\varphi_k}$ is real.  Hence,
\begin{equation}
\label{Eq:phase-cond-1}
\varphi+\phi_0-\phi_k+\varphi_k=n_{0k}\pi+\pi/2,\;\;\;n_{0k}\in\mbb{Z}, \quad\forall k=1,2,3
\end{equation}
Similarly, taking $k=0$ and $j>0$ we have
\begin{equation}
\label{Eq:phase-cond-2}
\varphi-\phi_0+\phi_j+\varphi_j=n_{j0}\pi+\pi/2,\;\;\;n_{j0}\in\mbb{Z}, \quad\forall j=1,2,3
\end{equation}
From this we infer
\begin{equation}
\label{Eq:phase-cond-3} 
\phi_0-\phi_l=(n_{0l}-n_{l0})\pi/2, \quad\forall l=1,2,3
\end{equation}
and 
\begin{align}\label{Eq:phase-cond-5}
\varphi+\varphi_l=(n_{0l}+n_{l0}+1)\pi/2 , \quad\forall l=1,2,3.
\end{align}

Now we turn to $j,k>0$.  We have
\begin{align}
e^{i\varphi}e^{i(\phi_{j}-\phi_k)}&\bra{\Phi_j}V\sigma_i V^\dagger\otimes\mbb{I}\ket{\Phi_k}\notag\\
&=e^{i\varphi}e^{i(\phi_{j}-\phi_k)}\tr[V\sigma_i V^\dagger\sigma_k\sigma_j],\quad\forall i=1,2,3.
\end{align}
Since $\sigma_k\sigma_j=i\epsilon_{kjl}\sigma_l$, the RHS of the previous equation becomes
\begin{equation}
i\epsilon_{kjl}e^{i\varphi}e^{i(\phi_j-\phi_k)}\tr[V\sigma_i V^\dagger \sigma_l].
\end{equation}
Again by Proposition \ref{Prop:Paul-real}, for this to be real, we need
\begin{equation}\label{Eq:phase-cond-4}
\varphi+\phi_j-\phi_k+\varphi_l=n_{jkl}\pi+\pi/2,\;\;\;n_{jkl}\in\mbb{Z}.
\end{equation}
for any distinct triple $(j,k,l)$ of nonzero indices.  
Substituting Eq. \eqref{Eq:phase-cond-5} into \eqref{Eq:phase-cond-4} we get
\begin{align}\label{Eq:phase-cond-6}
\phi_j-\phi_k=n_{jkl}\pi-(n_{0l}+n_{l0})\pi/2 ,
\end{align}
and adding Eq. \eqref{Eq:phase-cond-3} to this yields
\begin{align}
\phi_0-\phi_l+(\phi_j-\phi_k)\in \{n\pi\;|\;n\in\mbb{Z}\}
\end{align}
for any distinct triples $j,k,l>0$.  Finally, by applying the relations of Eqns. \eqref{Eq:magic-phase1}--\eqref{Eq:magic-phase4}, we have
\begin{align}
\alpha&=\frac{1}{4}(\phi_0-\phi_1+\phi_2-\phi_3)\in \{n\pi/4\;|\;n\in\mbb{Z}\}\notag\\
-\beta&=\frac{1}{4}(\phi_0-\phi_2+\phi_3-\phi_1)\in \{n\pi/4\;|\;n\in\mbb{Z}\}\notag\\
\gamma&=\frac{1}{4}(\phi_0-\phi_3+\phi_1-\phi_2)\in \{n\pi/4\;|\;n\in\mbb{Z}\}.
\end{align}
Hence $\alpha,\beta,\gamma$ are all integer multiples of $\pi/4$.  By Proposition \ref{Prop:magic-2}, it follows that $\Omega$ is a Clifford gate.
\end{proof}

\subsection{Proof of Theorem \ref{Thm:2qubits}}
\newtheorem*{Thm:2qubits}{Theorem \ref{Thm:2qubits}}
\begin{Thm:2qubits}
Any two-qubit unitary can be performed under LOBC with probability $(1- 2^{-N})^3$ using a consumption of $8N+1$ ebits.
\end{Thm:2qubits}

\begin{proof}

We freely interchange the symbols $\{1,2,3\}\leftrightarrow\{x,y,z\}$ to denote the standard Pauli operators.
We will also write the identity as $\sigma_0=\mbb{I}$.  The two-qubit controlled-not (CNOT) gate will be denoted as
\begin{align}
\overrightarrow{U}_{x}&=\op{0}{0}\otimes\mbb{I}+\op{1}{1}\otimes\sigma_1.
\end{align}
In addition, we define the single-qubit matrices
\begin{align}
H&=\frac{1}{\sqrt{2}}\begin{pmatrix}1&1\\1&-1\end{pmatrix}\\
R_z(\theta)&=\begin{pmatrix}e^{i\theta/2}&0\\0&e^{-i\theta/2}\end{pmatrix},
\end{align}
as well as the two-qubit unitary
\begin{equation}
T_z(\theta)=R_z(-\theta)\oplus R_z(\theta)=e^{-i\theta\sigma_z\otimes\sigma_z/2}.
\end{equation}
Observe the relations
\begin{subequations}
\begin{align}
\label{Eq:commute1}
T_z(\theta)(\sigma_i\otimes\mbb{I})&=(\sigma_i\otimes\mbb{I})T_z(\theta)\quad\text{for $i=0,3$}\\
\label{Eq:commute2}
T_z(\theta)(\sigma_i\otimes\mbb{I})&=(\sigma_i\otimes\mbb{I})T_z(-\theta)\quad\text{for $i=1,2$}\\
\label{Eq:commute3}
T_z(\theta)(\mbb{I}\otimes\sigma_i)&=(\mbb{I}\otimes\sigma_i)T_z(\theta)\quad\text{for $i=0,3$}\\
\label{Eq:commute4}
T_z(\theta)(\mbb{I}\otimes\sigma_i)&=(\mbb{I}\otimes\sigma_i)T_z(-\theta)\quad\text{for $i=1,2$}.
\end{align}
\end{subequations}
From Propositions \ref{Prop:magic-1} and \ref{Prop:magic-3}, pre- and post- local unitaries can convert a given $U$ into an operator $\Omega$, which in the magic basis has the diagonal form
\begin{equation}
\diag[e^{i(\alpha-\beta+\gamma)},e^{i(-\alpha+\beta+\gamma)},e^{i(\alpha+\beta-\gamma)},e^{i(-\alpha-\beta-\gamma)}].
\end{equation}
The magic basis can then be rotated into the computational basis using a CNOT gate and local unitaries.  Doing so allows us to decompose any two-qubit unitary into the form
\begin{equation}\label{eq:mdecomp}
M(\alpha,\beta,\gamma)=\overrightarrow{U}_{x}(H\otimes\mbb{I})T_z(\beta)(R_z(\alpha)\otimes R_z(\gamma))(H\otimes\mbb{I})\overrightarrow{U}_{x},
\end{equation}
up to pre- and post- local unitaries \cite{Vatan-2004a}. Thus it suffices to implement $M(\alpha,\beta,\gamma)$ using LOBC. Similar to Protocol U2E, Protocol U2 relies heavily on the subroutine teleportation$^*$. Recall that teleportation$^*$ is standard teleportation using a maximally entangled two-qubit state without the classical communication and Pauli correction at the end. \\

\noindent\textbf{Protocol U2: LOBC implementation of $M(\alpha,\beta,\gamma)$:}

\begin{remark} Prior to Step 5 b, all operations by Alice (resp. Bob) will depend only on her (resp. his) previous measurement outcomes.
\end{remark}

$\bullet$ Input an arbitrary two-qubit state $\ket{\psi}^{AB}$.

\medskip

\noindent\textbf{Step 1 - Implement $(H\otimes\mbb{I})\overrightarrow{U}_{x}$:}
\begin{enumerate}
\item[{}] Using 1 ebit, Alice and Bob implement CNOT using the protocol given in Theorem \ref{Thm:cHermitian}, except they do not communicate their measurement outcomes to each other.  Alice then performs a Hadamard gate.  This leaves Alice (A) and Bob (B) sharing the state
\begin{align}
(\sigma_b\otimes\sigma_a)(H\otimes\mbb{I})\overrightarrow{U}_{x}\ket{\psi}^{AB}=:(\sigma_b\otimes\sigma_a)\ket{\psi_1}^{AB},
\end{align}
where $\sigma_a$ (resp. $\sigma_b$) is a Pauli error known to Alice (resp. Bob).  Note that $a,b\in\{0,1\}$.
\end{enumerate}

\medskip

\noindent\textbf{Step 2 - Implement $\mbb{I}\otimes R_z(\gamma)$:}
\begin{enumerate}
\item[a.] Initialize round $r=1$.  On system $B$, Bob performs $R_z(\gamma)$.  Using ebit $\ket{\Phi^+}^{A_1B_1}$, he then teleports$^*$ system $B$ to Alice, which leaves her in the state
\begin{equation}
(\sigma_b\otimes[\sigma_{b_1}R_z(\gamma)\sigma_a])\ket{\psi_1}^{AA_1}.
\end{equation}
\item[b.] On system $A_1$, Alice applies $\sigma_a$, and she enters the halting subroutine (see below) if $a\in\{0,3\}$.  Otherwise, using ebit $\ket{\Phi^+}^{A_2B_2}$ she teleports$^*$ system $A_1$ to Bob.  The resulting shared state is given by
\begin{equation}
(\sigma_b\otimes[\sigma_{a_2}\sigma_{b_1}R_z(-\gamma)])\ket{\psi_1}^{AB_2}.
\end{equation}
\item[c.] This begins round $r=2$.  If $b_1\in\{0,3\}$, Bob applies $R_z(2\gamma)$ to system $B_2$.  If $b_1\in\{1,2\}$ he applies $R_z(-2\gamma)$.  Using ebit $\ket{\Phi^+}^{A_3B_3}$, system $B_2$ is teleported$^*$ back to Alice.  This leaves them in the state
\begin{equation}
(\sigma_b\otimes\left[\sigma_{b_3}\sigma_{b_1}R_z(2\gamma)\sigma_{a_2}R_z(-\gamma)\right])\ket{\psi_1}^{AA_3}.
\end{equation}
\item[d.] On system $A_3$, Alice applies $\sigma_{a_2}$ and she enters the halting subroutine if $a_2\in\{0,3\}$.  Otherwise, using ebit $\ket{\Phi^+}^{A_4B_4}$, she teleports$^*$ system $A_3$ to Bob.  The resulting shared state is given by
\begin{equation}
(\sigma_b\otimes\left[\sigma_{a_4}\sigma_{b_3}\sigma_{b_1}R_z(-3\gamma)\right])\ket{\psi_1}^{AB_4}.
\end{equation} 
\item[e.] This continues for $N$ total rounds.  In each round, Bob applies either a positive or negative rotation with twice the magnitude of the rotation in the previous round.  Whether the rotation is positive or negative depends on the product of all his previous Pauli errors.

At the end of $N$ rounds, Alice will have entered the halting subroutine in some round $1\leq K\leq N$ with probability $1-2^{-N}$. If she entered in round $K$, then the state held on Alice's side at the start of the halting subroutine is
\begin{equation}
\label{Eq:Before_Halting}
(\sigma_b\otimes\left[\prod_{j=0}^{K-1}\sigma_{b_{2j+1}}R_z(\gamma)\right])\ket{\psi_1}^{AA_{2K-1}},
\end{equation}
 and the joint state at the end of $N$ rounds is
\begin{equation}
\label{Eq:Halting_Final}
(\sigma_b\otimes\left[\sigma_{a_{2N}}\prod_{j=K}^{N-1}\sigma_{b_{2j+1}'}\prod_{j=0}^{K-1}\sigma_{b_{2j+1}}R_z(\gamma)\right])\ket{\psi_1}^{AB_{2N}},
\end{equation}
where the $\sigma_{b_{2j+1}'}$ are the Pauli errors introduced by Alice for each round after she halted and $\sigma_{a_{2N}}$ is the teleportation$^*$ error from end of the halting subroutine.  If Alice never entered the halting subroutine, then at the end of $N$ rounds Alice and Bob's state is given by
\begin{equation}
\label{Eq:No_Halting_Final}
(\sigma_b\otimes\left[\sigma_{a_{2N}}\prod_{j=0}^{N-1}\sigma_{b_{2j+1}}R_z(-(2^N-1)\gamma)\right])\ket{\psi_1}^{AB_{2N}}.
\end{equation}

\item[f.] Bob applies to system $B_{2N}$ the concatenation of all his Pauli errors $\sigma_{\mbf{b}}:=\prod_{j=0}^{N-1}\sigma_{b_{2j+1}}$.  The crucial property of this protocol is that 
\begin{equation}
\prod_{j=0}^{N-1}\sigma_{b_{2j+1}}\left(\prod_{j=K}^{N-1}\sigma_{b_{2j+1}'}\prod_{j=0}^{K-1}\sigma_{b_{2j+1}}\right)\in\{\mbb{I},\sigma_z\}
\end{equation}
for any halting round $K$.  This holds because in the halting subroutine, Alice is able to distinguish whether Bob's teleportation error belongs to either $\{\mbb{I},\sigma_z\}$ or $\{\sigma_x,\sigma_y\}$.

\item[g.] If either Alice entered the halting subroutine during some round or $\gamma=l2^{-N}\pi$ (by Corollary \ref{corr:binaryAngles}), where $l$ is an even integer, then Alice and Bob's final shared state has the form
\begin{align}
\label{Eq:Round2_complete}
\nonumber(\sigma_b\otimes[\sigma^{\nu}_z\sigma_{a_{2N}}R_z(\gamma)])&\ket{\psi_1}^{AB_{2N}}\\
&=:(\sigma_b\otimes[\sigma^{\nu}_z\sigma_{a_{2N}}])\ket{\psi_2}^{AB_{2N}},
\end{align}
where $\nu\in\{0,1\}$ is a function of Bob's Pauli errors and Alice's halting round number.  The total ebit consumption in round 2 is $2N$.

\end{enumerate}

\begin{remark}Operations by Bob in Step 2 do not depend on whether Alice has entered the Halting Subroutine. 
\end{remark}
\begin{remark}  Here, $a_{2j}$ (resp. $b_{2j+1})$ represents Pauli errors induced in teleportation$^*$ by Alice using $\ket{\Phi^+}^{A_{2j}B_{2j}}$ (resp. by Bob using $\ket{\Phi^+}^{A_{2j+1}B_{2j+1}}$).
\end{remark}

\medskip

\noindent\textbf{Halting Subroutine:}

Suppose that Alice enters the halting subroutine in round $K$.  For each $K\leq j< N$:
\begin{enumerate}
\item[a.] Alice measures her half of ebit $\ket{\Phi^+}^{A_{2j}B_{2j}}$ in the computational basis.  This collapses system $B_{2j}$ into either $\ket{0}$ or $\ket{1}$.
\item[b.]  In round $j+1$, Bob applies either $R_z(2^{j}\gamma)$ or $R_z(-2^{j}\gamma)$ to system $B_{2j}$, as he would do had Alice not entered the halting subroutine.  Since $\ket{0}$ and $\ket{1}$ are both eigenstates of $R_z(2^{j}\gamma)$ and $R_z(-2^{j}\gamma)$, system $B_{2j}$ remains unchanged during this step.
\item[c.] Bob teleports$^*$ system $B_{2j}$ to Alice using ebit $\ket{\Phi^+}^{A_{2j+1}B_{2j+1}}$.  Alice's state in system $A_{2j+1}$ will be either $\sigma_{b_{2j+1}}\ket{0}$ or $\sigma_{b_{2j+1}}\ket{1}$.
\item[d.] Alice measures system $A_{2j+1}$ in the computational basis and can determine if $b_{2j+1}\in\{0,3\}$ or $b_{2j+1}\in\{1,2\}$ by comparing the measurement result to step a. 
\item[e.] If a bit flip occurs, Alice defines $b_{2j+1}'=1$ and she applies $\sigma_1$ to system $A_{2K-1}$.  If no bit flip occurs, she does nothing to this system and defines $b_{2j+1}'=0$.
\end{enumerate}
When these steps have been completed for all $K\leq j<N$, Alice uses $\ket{\Phi}^{A_{2N}B_{2N}}$ to teleport$^*$ system $A_{2K-1}$ to Bob.

\medskip

\noindent\textbf{Step 3: - Implement $R_z(\alpha)\otimes\mbb{I}$:}

Starting from the state in Eq. \eqref{Eq:Round2_complete}, Alice and Bob repeat Step 2 except with the roles reversed and with gate $R_z(\alpha)$ applied to the first system.  This leads to a state of the form
\begin{align}
\label{Eq:Round3_complete0}
\nonumber([\sigma^{\mu}_z\sigma_{b_{2N}}R_z(\alpha)]\otimes[\sigma^{\nu}_z\sigma_{a_{2N}}])\ket{\psi_2}^{A'_{2N}B_{2N}}\\
=:([\sigma^{\mu}_z\sigma_{b_{2N}}]\otimes[\sigma^{\nu}_z\sigma_{a_{2N}}])\ket{\psi_3}^{A'_{2N}B_{3N}},
\end{align}
with $\mu\in\{0,1\}$ being a function of Alice's Pauli errors and Bob's halting round.  For convenience, we will relabel systems $A'_{2N}$ and $B_{2N}$, as well as the Pauli errors, so that the state at the end of Step 3 is simply denoted by
\begin{align}
\label{Eq:Round3_complete}
([\sigma^{\mu}_z\sigma_{b}]\otimes[\sigma^{\nu}_z\sigma_{a}])\ket{\psi_3}^{AB}.
\end{align}
This step uses 2N ebits.

\medskip

\noindent\textbf{Step 4: - Implement $T_z(\beta)$:}

\begin{enumerate}
\item[a.] Initialize round $r=1$.  Starting from Eq. \eqref{Eq:Round3_complete}, Alice teleports$^*$ system $A$ to Bob using the shared ebit $\ket{\Phi^+}^{A_1B_1}$.
\item[b.]  Bob applies $\sigma_b$ to system $B_1$ and $T(\beta)$ across systems $BB_1$.  He teleports$^*$ both systems to Alice using ebits $\ket{\Phi^+}^{A_2B_2}\ket{\Phi^+}^{A_3B_3}$.  The resulting state in Alice's systems has the form
\begin{equation}
([\sigma^{\mu}_z\sigma_{b_2}]\otimes [\sigma^{\nu}_z\sigma_{b_3}])T_z(\beta)(\sigma_{a_1}\otimes\sigma_a)\ket{\psi_3}^{A_2A_3}.
\end{equation}
Note, crucially, that the $\sigma_z^\mu\otimes\sigma_z^\nu$ errors commute with $T_z(\beta)$.  
\item[c.] Alice applies $(\sigma_{a_1}\otimes\sigma_a)$ to systems $A_2A_3$.  If $(\sigma_{a_1}\otimes\sigma_a)$ commutes with $T_z(\beta)$ Alice halts and does nothing more for all future rounds; this occurs with probability $1/2$ and the halted state is given by
\begin{equation}
([\sigma^{\mu}_z\sigma_{b_2}]\otimes [\sigma^{\nu}_z\sigma_{b_3}])T_z(\beta)\ket{\psi_3}^{A_2A_3}.
\end{equation}
Otherwise the state is $([\sigma^{\mu}_z\sigma_{b_2}]\otimes [\sigma^{\nu}_z\sigma_{b_3}])T_z(-\beta)\ket{\psi_3}^{A_2A_3}$, and Alice proceeds to the next round.
\item[d.] This begins round $r=2$.  Given that Alice did not halt in the previous round,  she teleports$^*$ both systems $A_2A_3$ back to Bob using ebits $\ket{\Phi^+}^{A_4B_4}\ket{\Phi^+}^{A_5B_5}$.  His resulting state is
\begin{equation}
([\sigma^{\mu}_z\sigma_{a_4}\sigma_{b_2}]\otimes [\sigma^{\nu}_z\sigma_{a_5}\sigma_{b_3}])T_z(-\beta)\ket{\psi_3}^{B_4B_5}.
\end{equation}
\item[e.] Bob applies $T_z(2\beta)(\sigma_{b_2}\otimes\sigma_{b_3})$ to systems $B_4B_5$ and teleports$^*$ them back to Alice using ebits $\ket{\Phi^+}^{A_6B_6}\ket{\Phi^+}^{A_7B_7}$.
\item[f.] Alice applies $(\sigma_{a_4}\otimes\sigma_{a_5})$ to systems $A_6A_7$.  With probability $1/2$, $(\sigma_{a_4}\otimes\sigma_{a_5})$ commutes with $T_z(2\beta)$, and in which case Alice does nothing more for all future rounds.  Otherwise she proceeds to the next round.
\item[g.] This is continued for $N$ total rounds, each time Bob applying either a positive or negative $T_z(\theta)$ rotation with magnitude twice the magnitude of the rotation in the previous round.
\item[h.] At the end of $N$ rounds, Alice holds the state
\begin{align}
\label{Eq:step4-1}
\nonumber([\sigma^{\mu}_z\sigma_{b_{4K-2}}]\otimes [\sigma^{\nu}_z\sigma_{b_{4K-1}}])T_z(\beta)\ket{\psi_3}^{A_{4K-2}A_{4K-1}}\\
=:([\sigma^{\mu}_z\sigma_{b_{4K-2}}]\otimes [\sigma^{\nu}_z\sigma_{b_{4K-1}}])\ket{\psi_4}^{A_{4K-2}A_{4K-1}}
\end{align}
if she halted in round $1\leq K\leq N$, which occurs with probability $1-2^{-N}$.  Otherwise, she holds the state
\begin{align}
\label{Eq:step4-2}
\nonumber([\sigma^{\mu}_z\sigma_{b_{4N-2}}]\otimes &[\sigma^{\nu}_z\sigma_{b_{4N-1}}])\\
&*[T_z(-(2^N-1)\beta)\ket{\psi_3}^{A_{4N-2}A_{4N-1}}].
\end{align}
If, $\beta=l2^{-N}\pi$, where $l$ is an even integer, then Eq. \eqref{Eq:step4-2} is equivalent to Eq. \eqref{Eq:step4-1} with $K=N$.  In total, Step 4 uses $4N-1$ ebits.
\end{enumerate}

\begin{remark}  The labels of subsystems have been reset in moving from Step 3 to Step 4.  For example, the ebit $\ket{\Phi^+}^{A_1B_1}$ used in part a of Step 4 is distinct from the ebit used in part a of Step 2.  This is done to prevent an overload of notation.
\end{remark}

\medskip

\noindent\textbf{Step 5: - Implement $\overrightarrow{U}_{x}(H\otimes\mbb{I})$:}

\begin{enumerate}
\item[a.] Starting with Eq. \eqref{Eq:step4-1}, Alice holds the entire state. Since all local Pauli errors commute with $\overrightarrow{U}_{x}(H\otimes\mbb{I})$, Alice just applies this unitary directly. This generates a state that is equivalent to 
\begin{align}
\nonumber\overrightarrow{U}_{x}(H\otimes\mbb{I})\ket{\psi_4}&^{A_{4K-2}A_{4K-1}}\\
&=M(\alpha,\beta,\gamma)\ket{\psi}^{A_{4K-2}A_{4K-1}}
\end{align}
up to local Pauli errors. Alice teleports$^*$, system $A_{4K-1}$ back to Bob.
\item[b.] Alice and Bob communicate all previous measurement outcomes and halting rounds to one another.  Using this information, the local Pauli errors can be corrected on the previous state. Step 5 uses 1 ebit.
\end{enumerate}




\end{proof}

Looking at step two of protocol U2, every failed rotation results in a rotation in the opposite direction.  We try to correct this by rotating with twice the angle of the previous step.  For certain unitaries $U(\alpha,\beta,\gamma)$ this leads to an implementation with probability one.
\begin{corollary}\label{corr:binaryAngles}
For any two-qubit unitary with $\alpha=l2^{-(N-1)}\pi$, $\beta=m2^{-(N-1)}\pi$, and $\gamma=p2^{-(N-1)}\pi$, where are $l,m,\text{ and }p$ are integers, $U(\alpha,\beta,\gamma)$ can be implemented deterministically (certainity) using LOBC with protocol U2.
\end{corollary}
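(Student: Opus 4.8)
The plan is to observe that the only source of randomness in Protocol U2 is the "no-halt" event in each of the three angle-implementing steps (Step 2 for $R_z(\gamma)$, Step 3 for $R_z(\alpha)$, and Step 4 for $T_z(\beta)$), and to show that under the stated quantization of the angles this event produces the \emph{same} target operation as the halting branches, so that the protocol becomes deterministic. First I would note that the hypotheses $\gamma=p2^{-(N-1)}\pi$, $\alpha=l2^{-(N-1)}\pi$, and $\beta=m2^{-(N-1)}\pi$ are exactly the conditions $\gamma=l'2^{-N}\pi$ (etc.) with $l'=2p$ even, matching the parenthetical remarks made in Steps 2g and 4h. Since Steps 2 and 3 are identical up to swapping the roles of Alice and Bob, it suffices to treat the single-qubit rotation of Step 2 and the two-qubit rotation $T_z$ of Step 4 separately.

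For Step 2, I would compare the no-halt final state in Eq. \eqref{Eq:No_Halting_Final}, whose residual single-qubit rotation is $R_z(-(2^N-1)\gamma)$, with the target $R_z(\gamma)$. Because $R_z$ is diagonal and its angles add, $R_z(-(2^N-1)\gamma)=R_z(\gamma)R_z(-2^N\gamma)$, where $R_z(-2^N\gamma)=\diag(e^{-i2^{N-1}\gamma},e^{i2^{N-1}\gamma})$. Substituting $2^{N-1}\gamma=p\pi$ collapses both diagonal entries to $(-1)^p$, so $R_z(-2^N\gamma)=(-1)^p\,\mbb{I}$ is a global phase and the no-halt branch realizes $R_z(\gamma)$ up to this phase. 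Hence Alice's failure to halt in Step 2 (and, by the same computation with $2^{N-1}\alpha=l\pi$, in Step 3) does not change the implemented gate, and the $1-2^{-N}$ success probability of these steps is promoted to certainty.

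For Step 4 I would carry out the analogous comparison between Eq. \eqref{Eq:step4-2} and Eq. \eqref{Eq:step4-1} with $K=N$, where the residual operator is $T_z(-(2^N-1)\beta)$ versus the target $T_z(\beta)$. Writing $T_z(\theta)=e^{-i\theta\sigma_z\otimes\sigma_z/2}$ and using additivity of the parameter gives $T_z(-(2^N-1)\beta)=T_z(\beta)\,e^{i2^{N-1}\beta\,\sigma_z\otimes\sigma_z}$, and $2^{N-1}\beta=m\pi$ reduces the correction factor to $e^{im\pi\,\sigma_z\otimes\sigma_z}$. This is the step I expect to require the most care, since $\sigma_z\otimes\sigma_z$ has two distinct eigenvalues $\pm1$ rather than a single one as in the $R_z$ case; however, because $e^{im\pi}=e^{-im\pi}=(-1)^m$ for integer $m$, the $+1$ and $-1$ eigenspaces acquire the identical phase $(-1)^m$, so $e^{im\pi\,\sigma_z\otimes\sigma_z}=(-1)^m\,\mbb{I}\otimes\mbb{I}$ and $T_z(-(2^N-1)\beta)=(-1)^m\,T_z(\beta)$.

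Assembling these observations, each of the three angle-dependent steps outputs its target operation in every branch, the leftover Pauli and $\sigma_z^\mu,\sigma_z^\nu$ errors being corrected exactly as in the proof of Theorem \ref{Thm:2qubits} after the final broadcast in Step 5b. The accumulated scalars $(-1)^p,(-1)^l,(-1)^m$ merge into an overall phase on $M(\alpha,\beta,\gamma)$ and are physically irrelevant. Therefore Protocol U2 implements $U(\alpha,\beta,\gamma)$ with probability one whenever the angles have the stated dyadic form.
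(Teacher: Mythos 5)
Your proposal is correct and follows essentially the same route as the paper: both arguments show that when $2^{N-1}\theta$ is an integer multiple of $\pi$, the no-halt residual rotation $R_z(-(2^N-1)\theta)$ (resp. $T_z(-(2^N-1)\beta)$) coincides with the target $R_z(\theta)$ (resp. $T_z(\beta)$) up to an irrelevant sign, so every branch of Protocol U2 succeeds. In fact you are somewhat more thorough than the paper, which only writes out the $R_z$ computation for $\alpha$ and asserts the other angles are the same, whereas you explicitly verify the two-qubit case $e^{im\pi\,\sigma_z\otimes\sigma_z}=(-1)^m\,\mbb{I}\otimes\mbb{I}$ for $T_z(\beta)$.
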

\begin{proof}
Let us examine the proof for $\alpha$. The proof for the other two angles are the same. In step 2 of protocol U2, if Alice never enters the halting subroutine, then from \eqref{Eq:No_Halting_Final} we have
\begin{align}
R_z(-(2^N-1)\alpha)=R_z((-l2\pi+\alpha))=\pm R_z(\alpha).
\end{align}
\end{proof}


\subsection{Proof of Theorem \ref{Thm:2bys}}
\newtheorem*{Thm:2bys}{Theorem \ref{Thm:2bys}}
\begin{Thm:2bys}
Let
\begin{equation}
U_\tau=\sum_{j=0}^{s-1} e^{i\tau_j}\op{j}{j}
\end{equation}
have phase angles $\tau_j\in[0,2\pi)$ such that $\tau_k\not=\tau_l$ for all $k\not=l\in\{0,\cdots,s-1\}$.  An LOBC implementation of the controlled unitary
\begin{equation}
U_c=\op{0}{0}\otimes \mbb{I}_s+\op{1}{1}\otimes U_\tau
\end{equation}
on a $2\otimes s$ system requires at least $\log s$ ebits of shared entanglement resource.
\end{Thm:2bys}
\begin{proof}
First, note that we can assume the entangled resource is a pure state $\ket{\eta}$.  The reason is that we are considering the exact implementation of $U_\tau$.  If a mixed-state resource was used, then the simulation of $U_\tau$ would need to be carried out for every single pure state in the mixture.  Hence one could dispense of the mixture and just exclusively use the pure state of lowest entanglement in the mixture.  The entanglement of this pure state places a lower bound on the entanglement of the original mixed state for any convex-roof extended entanglement measure, like the entanglement of formation \cite{Horodecki-2009a}.
 
A general LOBC protocol can be characterized by a local measurement for Alice and Bob, with Kraus operators $\{A_a\}_{a\in\mc{A}}$ and $\{B_b\}_{b\in\mc{B}}$ respectively, along with families of local unitaries, $\{W_{a,b}\}_{a\in\mc{A},b\in\mc{B}}$ for Alice and $\{V_{a,b}\}_{a\in\mc{A},b\in\mc{B}}$ for Bob.  The protocol will successfully simulate $U_c$ using a $d$-dimensional resource state $\ket{\eta}:=(\mbb{I}\otimes\hat{\eta})\ket{\Phi^+_d}$ if and only if for every $a\in\mc{A}$ and $b\in\mc{B}$ it holds that
\begin{align}\label{Eq:controlUisim}
\nonumber \left(\mbb{I}^{A_0B_0}\otimes M^{AA'BB'\to AB}_{ab}\right)\ket{\Phi^+}^{A_0A}\ket{\Phi_s^+}^{B_0B}\ket{\eta}^{A'B'}\\
=\gamma_{a,b}\mbb{I}^{A_0B_0}\otimes U_c^{AB}\ket{\Phi^+}^{A_0A}\ket{\Phi_s^+}^{B_0B}
\end{align}
where $M^{AA'BB'\to AB}_{ab}=W_{a,b}A_{a}^{AA'\to A}\otimes V_{a,b}B_b^{BB'\to B}$.  The amplitude $|\gamma_{a,b}|^2$ is the probability that Alice obtains measurement outcome $a\in\mc{A}$ and Bob obtains $b\in\mc{B}$.  To analyze further, it will be helpful to expand $A_a$ and $B_b$ in an orthonormal basis for system $A$ and $B$ respectively.  Doing so yields the general forms
\begin{align}\label{Eq:WAVBoperator}
\nonumber A_a =& \left(\sum\limits_{i=0}^1\ketbra{i}{0}^A\otimes\bra{\alpha_{0,i,a}}^{A'}+\sum\limits_{i=0}^1\ketbra{i}{1}^A\otimes\bra{\alpha_{1,i,a}}^{A'}\right)\\
\nonumber B_b =&\biggl(\sum\limits_{j=0}^{s-1}\ketbra{j}{0}^B\otimes\bra{\beta_{0,j,b}}^{B'}+\sum\limits_{j=0}^{s-1}\ketbra{j}{1}^B\otimes\bra{\beta_{1,j,b}}^{B'}\\
&+\cdots+\sum\limits_{j=0}^{s-1}\ketbra{j}{s-1}^B\otimes\bra{\beta_{s-1,j,b}}^{B'}\biggr)
\end{align}
where $\ket{\alpha_{i',i,a}}$ and $\ket{\beta_{j',j,b}}$ are both vectors in a $d$-dimensional space.  When expanded in the same basis, the RHS of Eq. \eqref{Eq:controlUisim} reads
\begin{align}\label{Eq:Uiaction}
\nonumber\gamma_{a,b}\mbb{I}^{A_0B_0}&\otimes U_c^{AB}\ket{\Phi^+}^{A_0A}\ket{\Phi_s^+}^{B_0B}=\dfrac{\gamma_{a,b}}{\sqrt{2s}}\biggl(\ket{00}^{A_0A}\\
&\otimes\sum_{j=0}^{s-1}\ket{jj}^{B_0B}+\ket{11}^{A_0A}\otimes\sum_{j=0}^{s-1}e^{i\tau_j}\ket{jj}^{B_0B}\biggr).
\end{align}
Thus, substituting \eqref{Eq:Uiaction} and \eqref{Eq:WAVBoperator} into \eqref{Eq:controlUisim} yields 
\begin{align}\label{Eq:substitute1}
\nonumber\sum_{i',i=0}^1\sum_{j',j=0}^{s-1}\ket{i'j'}^{A_0B_0}&\left(W_{a,b}^A\ket{i}^{A}\otimes V_{a,b}^B\ket{j}^{B}\bra{\beta_{j',j,b}}\hat{\eta}\ket{\alpha^*_{i',i,a}}\right)\\
\nonumber =&\gamma_{a,b}\sum_{j'=0}^{s-1}(\ket{0j'}^{A_0B_0}\otimes\ket{0j'}^{AB}\\
&+\ket{1j'}^{A_0B_0}\otimes e^{i\tau_{j'}}\ket{1j'}^{AB}),
\end{align}
where we use the relation $(\bra{\alpha_{i',i,a}}\otimes\bra{\beta_{j',j,b}})(\mbb{I}\otimes\hat{\eta})\ket{\Phi^+_d}=\bra{\beta_{j',j,b}}\hat{\eta}\ket{\alpha^*_{i',i,a}}$.  Eq. \eqref{Eq:substitute1} is equivalent to the system of equalities:
\begin{align}
\sum_{i,j}\ket{i}^A\ket{j}^B\bra{\beta_{0,j,b}}\hat{\eta}\ket{\alpha_{0,i,a}^*}&=\gamma_{a,b}W_{a,b}^\dagger\ket{0}^A\otimes V_{a,b}^\dagger\ket{0}^B\tag{E:$0$}\label{Eq:cons1s}\\
\sum_{i,j}\ket{i}^A\ket{j}^B\bra{\beta_{1,j,b}}\hat{\eta}\ket{\alpha_{0,i,a}^*}&=\gamma_{a,b}W_{a,b}^\dagger\ket{0}^A\otimes V_{a,b}^\dagger\ket{1}^B\tag{E:$1$}\label{Eq:cons2s}\\
&\vdots\notag\\
\nonumber\sum_{i,j}\ket{i}^A\ket{j}^B\bra{\beta_{s-1,j,b}}\hat{\eta}\ket{\alpha_{0,i,a}^*}&\\
=\gamma_{a,b}&W_{a,b}^\dagger\ket{0}^A\otimes V_{a,b}^\dagger\ket{s-1}^B\tag{E:$s-1$}\label{Eq:cons3s}\\
\nonumber\sum_{i,j}\ket{i}^A\ket{j}^B\bra{\beta_{0,j,b}}\hat{\eta}\ket{\alpha_{1,i,a}^*}&\\
=e^{i\tau_{0}}&\gamma_{a,b}W_{a,b}^\dagger\ket{1}^A\otimes V_{a,b}^\dagger\ket{0}^B\tag{F:$0$}\label{Eq:cons4s}\\
&\vdots\notag\\
\nonumber \sum_{i,j}\ket{i}^A\ket{j}^B\bra{\beta_{s-1,j,b}}\hat{\eta}\ket{\alpha_{1,i,a}^*}&\\
=e^{i\tau_{s-1}}&\gamma_{a,b}W_{a,b}^\dagger\ket{1}^A\otimes V_{a,b}^\dagger\ket{s-1}^B.\tag{F:$s-1$}\label{Eq:cons5s}
\end{align}
For any $k,k'\in\{0,\cdots,s-1\}$, take the outer products of Eqs. (E:$k$) and (E:$k'$), trace out system $A$, and sum over $a$.  Using the completion relation $\sum_{i,a}\op{\alpha^*_{0,i,a}}{\alpha^*_{0,i,a}}=\mbb{I}^{A'}$ we obtain
\begin{align}
\label{Eq:op1}
\sum_{j,j'}\op{j}{j'}^B\bra{\beta_{k,j,b}}\hat{\eta}\hat{\eta}^\dagger\ket{\beta_{k',j',b}}=\sum_{a}|\gamma_{a,b}|^2V^\dagger_{a,b}\op{k}{k'}V_{a,b}.
\end{align}
Performing the same calculation on Eqns. (F:$k$) and (F:$k'$) yields
\begin{align}
\label{Eq:op2}
\nonumber\sum_{j,j'}\op{j}{j'}^B\bra{\beta_{k,j,b}}\hat{\eta}\hat{\eta}^\dagger&\ket{\beta_{k',j',b}}\\
&=e^{i(\tau_k-\tau_{k'})}\sum_{a}|\gamma_{a,b}|^2V^\dagger_{a,b}\op{k}{k'}V_{a,b}.
\end{align}
From the assumption that $\tau_k\not=\tau_{k'}$ for $k\not=k'$, Eqs. \eqref{Eq:op1} and \eqref{Eq:op2} can both be true only if they are equaling zero; hence
\begin{equation}
\label{Eq:orthogonal_cond}
\bra{\beta_{k,j,b}}\hat{\eta}\hat{\eta}^\dagger\ket{\beta_{k',j',b}}=0\qquad\forall k\not=k',\;\forall j,j'\in\{0,\cdots,s-1\}.
\end{equation}
We next define the operators
\begin{align}
\nonumber M_{b,t}=\frac{1}{s}\sum_{j=0}^{s-1}\sum_{k=0}^{s-1}\op{j}{\beta_{k,j,b}}e^{2\pi i tk/s},\\
b\in\mc{B},t\in\{0,1\cdots,s-1\}.
\end{align}
These, in fact, are Kraus operators for a complete measurement on system $B'$, as can be seen by
\begin{align}
\sum_{b,t}M_{b,t}^\dagger& M_{b,t}=\frac{1}{s^2}\sum_{b\in\mc{B}}\sum_{t,j,k,k'=0}^{s-1}\op{\beta_{k,j,b}}{\beta_{k',j,b}}e^{2\pi i t(k'-k)/s}\notag\\
&=\frac{1}{s^2}\sum_{b\in\mc{B}}\sum_{j,k,k'=0}^{s-1}\op{\beta_{k,j,b}}{\beta_{k',j,b}}\sum_{t=0}^{s-1}e^{2\pi i t(k'-k)/s}\notag\\
&=\frac{1}{s}\sum_{k=0}^{s-1}\sum_{b\in\mc{B}}\sum_{j=0}^{s-1}\op{\beta_{k,j,b}}{\beta_{k,j,b}}=\frac{1}{s}\sum_{k=0}^{s-1}\mbb{I}^{B'}=\mbb{I}^{B'}.
\end{align}
When this measurement is performed on $\hat{\eta}\hat{\eta}^\dagger$, we find
\begin{align}
\label{Eq:LOCC-reduce}
&M_{b,t}(\hat{\eta}\hat{\eta}^\dagger )M_{b,t}^\dagger\notag\\
&=\frac{1}{s^2}\sum_{j,j'=0}^{s-1}\sum_{k,k'=0}^{s-1}e^{2\pi i t(k-k')/s}\op{j}{j'}\bra{\beta_{k,j,b}}\hat{\eta}\hat{\eta}^\dagger\ket{\beta_{k',j',b}}\notag\\
&=\frac{1}{s^2}\sum_{k=0}^{s-1}\sum_{j,j'=0}^{s-1}\op{j}{j'}\bra{\beta_{k,j,b}}\hat{\eta}\hat{\eta}^\dagger\ket{\beta_{k,j',b}}\notag\\
&=\frac{1}{s}\sum_{a\in\mc{A}}|\gamma_{a,b}|^2\frac{\mbb{I}}{s},
\end{align}
where the second line follows from Eq. \eqref{Eq:orthogonal_cond} and the third line comes from setting $k=k'$ in Eq. \eqref{Eq:op1} and then summing over $k$ in both sides of that equation.  On the level of purifications, Eq. \eqref{Eq:LOCC-reduce} says that $(\mbb{I}^{A'}\otimes M^{B'}_{b,t})\ket{\eta}^{A'B'}$ is proportional to an $s$-dimensional maximally entangled state.  Since this holds for every outcome $M_{b,t}$, monotonicity of the entanglement entropy under local measurement implies that
\begin{equation}
\label{Eq:lowerbound}
\mathrm{E}(\ket{\eta})\geq \log s.
\end{equation} 
\end{proof}
\begin{remark}
The lower bound of Eq. \eqref{Eq:lowerbound} has been proven for the \textit{exact} implementation of $U_c$.  Thus, its significance lies in establishing the principle that LOBC requires more entanglement than LOCC for simulating certain gates, and this gap cannot be bounded even when fixing one of the systems to be a qubit.  To have true cryptographic application in tasks such as QPV, one would want a similar result for an $\epsilon$-approximate simulation of $U_c$.  We leave this to future work.
\end{remark}
\section{Conclusions}

\label{Sect:Conclusion} 

The LOBC setting is important in distributed quantum computing when time is of the essence. In this paper, we focused on the task of instantaneous nonlocal quantum computation, which is gate simulation using LOBC operations and pre-shared entanglement.  We have introduced a general two-qubit protocol that is exponentially better than other known protocols in terms of its entanglement consumption as a function of gate error.  We have shown this protocol to be non-optimal for the simulation of certain gates, such as swap, which can be implemented using just two ebits.  This two-ebit cost for swap is optimal even when interactive LOCC operations is permitted, two ebits are required for the implementation.  This is somewhat surprising given that swap is the most nonlocal two-qubit gate in the sense that it can generate the most entanglement, and it can be used for simultaneous message exchange between Alice and Bob.  Thus, our results suggest that the benefits of interactive communication in LOCC gate simulation mainly pertain to the entanglement cost of simulation rather than the entangling power of the simulated gate.


For a $2\otimes s$ system, we have shown that generic controlled unitary gates controlled from the $2$-dimensional side require at least $\log(s)$ ebits to implement. Currently we do not know whether this lower bound is close to achievable.  The known protocols have an ebit consumption that scales linearly with $s$ and some function of the error parameter, and it is an important open problem to determine if this exponential gap can be closed.  A more general theoretical question is whether every nonlocal gate can be perfectly implemented by LOBC using a \textit{finite} amount of entanglement.  Even in two-qubits, our new protocol has some failure probability unless $U(\alpha,\beta,\gamma)$ has special angles.  It is unknown if a protocol with no failure branches exists for every $U(\alpha,\beta,\gamma)$.


\medskip

\section*{Acknowledgments}  We are extremely grateful to Barbara Kraus for providing helpful feedback and explaining various properties of multi-qubit unitaries.  We also thank the Centro de Ciencias de Benasque Pedro Pascual for hosting the 2018 multipartite entanglement workshop where an earlier version of this work was presented.  This work was supported by the Office of Naval Research Award No. N00014-15-12646.

\bibliographystyle{IEEEtran}
\bibliography{Broadcast_LOCC}

\end{document}